\documentclass[preprint,12pt]{elsarticle}
\usepackage{amsmath}
\usepackage{amsfonts}
\usepackage{amssymb}
\usepackage{amsthm}
\usepackage{graphicx}
\usepackage{verbatim}
\usepackage{subfigure}

\usepackage{dsfont}
\usepackage[numbers]{natbib}

\usepackage{subfigure}

\textheight 220mm \textwidth 155 mm \topmargin -5 mm
\newtheorem{thm}{Theorem}[section]

\newtheorem{defn}[thm]{Definition}
\newtheorem{rem}[thm]{Remark}
\newtheorem{prop}[thm]{Proposition}


\newcommand{\levy}{{L\'evy }}

\def\1{\mathds{1}}
\def\R{\mathbb{R}}
\def\P{\mathrm{P}}
\def\Q{\mathrm{Q}}
\def\E{\mathrm{E}}
\def\N{\mathbb{N}}
\def\d{\,\mathrm{d}}
\def\dd{\mathrm{d}}

\def\var{\mathrm{var}}
\def\cov{\mathrm{cov}}
\def\th{\theta}
\def\Th{\Theta}
\def\e{\mathrm{e}}
\def\b{\beta}
\def\g{\gamma}
\def\G{\Gamma}
\def\a{\alpha}

\def\l{\lambda}
\def\law{\overset{\textnormal{law}}{=}}
\def\tp{\top}
\def\F{\mathcal{F}}

\def\Borel{\mathcal{B}(\R)}
\def\k{\kappa}
\def\t{\tau}
\def\h{\psi}
\def\D{\Delta}
\def\B{\mathrm{B}}
\def\ab{\boldsymbol{\alpha}}
\def\xib{\boldsymbol{\xi}}

\begin{document}

\bibliographystyle{model1b-num-names}
\begin{frontmatter}

\title{Archimedean Survival Processes}
\author[label1,labelemail]{Edward Hoyle}
\author[label2]{Levent Ali Meng\"ut\"urk}

\address[label1]{Fulcrum Asset Management, 5--7 Chesterfield Gardens, London W1J 5BQ, UK}
\address[label2]{Department of Mathematics, Imperial College London, London SW7 2AZ, UK}
\address[labelemail]{\textnormal{\texttt{ed.hoyle@fulcrumasset.com}}}

\begin{abstract}
Archimedean copulas are popular in the world of multivariate modelling as a result of their breadth, tractability, and flexibility.
\citeauthor{MN2009} (2009) showed that the class of Archimedean copulas coincides with the class of positive multivariate $\ell_1$-norm symmetric distributions.
Building upon their results, we introduce a class of multivariate Markov processes that we call `Archimedean survival processes' (ASPs).
An ASP is defined over a finite time interval, is equivalent in law to a vector of independent gamma processes, and its terminal value has an Archimedean survival copula.
There exists a bijection from the class of ASPs to the class of Archimedean copulas.
We provide various characterisations of ASPs, and a generalisation.
\end{abstract}

\begin{keyword}
	Archimedean copula, gamma process, gamma bridge, multivariate Liouville distribution
\end{keyword}

\end{frontmatter}

\section{Introduction} \label{sec:intro}
The use of copulas has become commonplace for dependence modelling in finance, insurance, and risk management (see, for example, \citet{CLV2004}, \citet{FV1998}, and \citet{MFE2005}).
From a modelling perspective, one of the attractive features of copulas is that they allow the fitting of one-dimensional marginal distributions to be performed separately from the fitting of cross-sectional dependence.

The Archimedean copulas---a subclass of copulas---have received particular attention in the literature for both their tractability and practical convenience (see, for example, \citeauthor{GM1986a} \citep{GM1986b,GM1986a} and \citet[chap.~4]{N1999}).
We introduce a family of multivariate stochastic processes that we call \emph{Archimedean survival processes} (ASPs).
ASPs are constructed in such a way that they are naturally linked to Archimedean copulas.
An ASP is defined over a finite time horizon, and its terminal value has a multivariate $\ell_1$-norm symmetric distribution. 
This implies that the terminal value of an ASP has an Archimedean survival copula.
Indeed, there is a bijection from the class of Archimedean copulas to the class of ASPs.

\citet{RN2} suggested using a randomly-scaled gamma bridge (also called a Dirichlet process) for modelling the cumulative payments made on insurance claims (see also \citet{BHM3}).
Such a process $\{\xi_{tT}\}_{0\leq t\leq T}$ can be constructed as $\xi_{tT}=R  \g_{tT}$,
where $R$ is a positive random variable independent of a gamma bridge $\{\g_{tT}\}$ satisfying $\g_{0T}=0$ and $\g_{TT}=1$, for some $T\in(0,\infty)$.
This is an increasing process and so lends itself to the modelling of cumulative gains or losses; 
in this case the random variable $R$ represents the total, final gain.
We can interpret $R$ as a signal and the gamma bridge $\{\g_{tT}\}$ as independent multiplicative noise.
The process $\{\xi_{tT}\}$ can be considered to be a gamma process conditioned so that $\xi_{TT}$ has the law of $R$,
and so belongs to the class of \levy random bridges (see \citet{HHM1}).
As such, we call the process $\{\xi_{tT}\}$ a `gamma random bridge' (GRB).

ASPs are an $n$-dimensional extension of gamma random bridges.
Each one-dimensional marginal process $\{\xi^{(i)}_t\}$ of an ASP $\{(\xi_t^{(1)},\ldots,\xi_t^{(n)})^\tp\}_{0\leq t \leq T}$ is a GRB.
We shall construct each $\{\xi_t^{(i)}\}$ by splitting a `master' GRB into $n$ non-overlapping subprocesses. 
This method of splitting a \levy random bridge into subprocesses (which are themselves \levy random bridges) was used by \citet{HHM2} to develop a bivariate insurance reserving model based on random bridges of the stable-1/2 subordinator. 
A remarkable feature of the proposed construction is that the terminal vector $(\xi_T^{(1)},\ldots,\xi_T^{(n)})^\tp$ has a multivariate $\ell_1$-norm symmetric distribution, and hence an Archimedean survival copula. 

We shall also construct Liouville processes by splitting a GRB into $n$ pieces.
By allowing more flexibility in the splitting mechanism and by employing some deterministic time changes, a broader range of behaviour can be achieved by Liouville processes than ASPs.
For example, the one-dimensional marginal processes of a Liouville process are in general not identical in law.

A direct application of ASPs and Liouville processes is to the modelling of multivariate cumulative gain (or loss) processes.
Consider, for example, an insurance company that underwrites several lines of motor business (such as personal motor, fleet motor or private-hire vehicles) for a given accident year.
A substantial payment made on one line of business is unlikely to coincide with a substantial payment made on another line of business 
(e.g.~a large payment is unlikely to be made on a personal motor claim at the same time as a large payment is made on a fleet motor claim).
However, the total sums of claims arising from the lines of business will depend on certain common factors such as prolonged periods of adverse weather or the quality of the underwriting process at the company.
Such common factors will produce dependence across the lines.
An ASP or a Liouville process might be a suitable model for the cumulative paid-claims processes of the lines of motor business.
The one-dimensional marginal processes of a Liouville process are increasing and do not exhibit simultaneous large jumps, but they can display strong correlation.

ASPs can be used to interpolate a dependence structure when using Archimedean copulas in discrete-time models.
Consider a risk model where the marginal distributions of the returns on $n$ assets are fitted for the future dates $t_1<\cdots<t_n<T<\infty$.
An Archimedean copula $C$ is used to model the dependence of the returns to time $T$.
At this stage we have a model for the joint distribution of returns to time $T$, but we have only the one-dimensional marginal distributions at the intertemporal times $t_1,\ldots,t_n$.
The problem then is to choose copulas to complete the joint distributions of the returns to the times $t_1,\ldots,t_n$ in a way that is consistent with the time-$T$ joint distribution. 
For each time $t_i$, this can be achieved by using the time-$t_i$ survival copula implied by the ASP with survival copula $C$ at terminal time $T$.

This paper is organized as follows: In Section \ref{sec:prelim}, we review multivariate $\ell_1$-norm symmetric distributions, multivariate Liouville distributions, Archimedean copulas and gamma random bridges. 
In Section \ref{sec:ASP}, we define ASPs and provide various characterisations of their law. 
We detail how to construct a multivariate process such that each one-dimensional marginal is uniformly distributed.
An application is then given where an ASP is used to solve an Archimedean copula interpolation problem.
In Section \ref{sec:LP}, we generalise ASPs to Liouville processes.
We apply Liouville processes to the intraday forecasting of realized variance.
In Section \ref{sec:conclude}, we state our conclusions.

\section{Preliminaries} \label{sec:prelim}

We fix a probability space $(\Omega,\P,\F)$ and assume that all processes under consideration are c\`adl\`ag, and all filtrations are right-continuous. 
We let $f^{-1}$ denote the generalised inverse of a monotonic function $f$.
Thus, if $f$ is decreasing then $f^{-1}(y)=\inf\{x: f(x)\leq y\}$.
We denote the $\ell_1$ norm of a vector $\mathbf{x}\in\R^n$ by $\|\mathbf{x}\|$, i.e.~$\|\mathbf{x}\|=\sum_{i=1}^n |x_i|$.

We present some definitions and results from the theory of multivariate distributions and refer the reader to \citet{FKN1990} for further details.

Let $\mathbf{G}$ be a vector of independent random variables such that $G_i$ is a gamma random variable with shape parameter $\a_i>0$ and unit scale parameter. 
Then the random vector $\mathbf{D}=\mathbf{G}/\|\mathbf{G}\|$ has a \emph{Dirichlet distribution} with \emph{parameter vector} $\ab=(\a_1,\ldots,\a_n)^\tp$.
In two dimensions, a Dirichlet random variable can be written as $(B,1-B)^\tp$, where $B$ is a beta random variable. 
If all the elements of the parameter vector $\ab$ are identical, then $\mathbf{D}$ is said to have a \emph{symmetric} Dirichlet distribution. 

A random variable $\mathbf{X}$ taking values in $\R^n$ has a \emph{multivariate Liouville distribution} if $\mathbf{X}\law R \mathbf{D}$,
for $R\geq 0$ a random variable, and $\mathbf{D}$ a Dirichlet random variable, independent of $R$, with parameter vector $\ab$. 
We call the law of $R$ the \emph{generating law} and $\ab$ the \emph{parameter vector} of the distribution.
In the case where $R$ is positive and has a density $p$, the density of $\mathbf{X}$ exists and can be written as
	\begin{equation}
		\label{eq:MLD}
		\mathbf{x}\mapsto \G(\|\ab\|)\frac{p\left(\|\mathbf{x}\|\right)}{\left(\|\mathbf{x}\|\right)^{\|\ab\|-1}} \prod_{i=1}^n \frac{x_i^{\a_i-1}}{\G(\a_i)},
	\end{equation}
for $\mathbf{x}\in\R_+^n$, where $\G(x)$ is the gamma function \citep[6.1]{AS1964}.
In the case $\ab=(1,\ldots,1)^{\tp}$, $\mathbf{X}$ has a \emph{multivariate $\ell_1$-norm symmetric distribution}.
A multivariate $\ell_1$-norm symmetric distribution is characterised by its generating law.

\citet{MN2009} give an account of how Archimedean copulas coincide with survival copulas of $\ell_{1}$-norm symmetric distributions which have no point-mass at the origin.
Then in \citep{MN2010}, \citeauthor{MN2010} generalise Archimedean copulas to so-called Liouville copulas, which are defined as the survival copulas of multivariate Liouville distributions.

A copula is a distribution function on the unit hypercube with the added property that each one-dimensional marginal distribution is uniform.
Archimedean copulas are copulas that take a particular functional form. 
The following definition given in \citep{MN2009} is convenient for the present work:
	A decreasing and continuous function $\h:[0,\infty)\rightarrow[0,1]$ which satisfies the conditions $\h(0)=1$ and $\lim_{x\rightarrow\infty}\h(x)=0$, 
	and is strictly decreasing on $[0,\inf\{x:\h(x)=0\})$ is called an \emph{Archimedean generator}.
	An $n$-dimensional copula $C$ is called an \emph{Archimedean copula} if it permits the representation
	\[ C(\mathbf{u})=\h(\h^{-1}(u_1)+\cdots+\h^{-1}(u_n)),  \qquad \mathbf{u}\in[0,1]^n,\]
	for some Archimedean generator $\h$ with inverse $\h^{-1}:[0,1]\rightarrow[0,\infty)$, where we set $\h(\infty)=0$ and $\h^{-1}(0)=\inf\{u:\h(u)=0\}$.

If $\mathbf{X}$ has an $n$-variate $\ell_1$-norm symmetric distribution with generating law $\nu$ and $\P(\mathbf{X}=\mathbf{0})=0$,
then $\mathbf{X}$ has an Archimedean survival copula with generator
\begin{equation}
	\label{eq:bijection1}
	\h(x)=\P(X_i>x)=\int_x^\infty (1-x/r)^{n-1} \nu(\dd r).
\end{equation}
\citet{MN2009} showed that the converse is also true: 
If $\mathbf{U}$ has an $n$-dimensional Archimedean copula $C$ with generator $\psi$,
then $(\psi^{-1}(U_1),\ldots,\psi^{-1}(U_n))^{\tp}$ has a multivariate $\ell_1$-norm distribution with survival copula $C$ and generating law $\nu$ given by
\begin{equation} \label{eq:bijection2}
	\nu([0,x])=1-\sum_{k=0}^{n-2}\frac{(-1)^kx^k\psi^{(k)}(x)}{k!}-\frac{(-1)^{n-1}x^{n-1} \psi^{(n-1)}_+(x)}{(n-1)!}, \quad x\geq 0,
\end{equation}
where $\psi^{(k)}$ is the $k$th derivative of $\psi$, and $\psi^{(n-1)}_+$ is the right-hand sided derivative of order $n-1$.

A gamma random bridge is an increasing stochastic process, and both the gamma process and gamma bridge are special cases.
A gamma process is a subordinator (an increasing \levy process) with gamma distributed increments (see, for example, \citet{Sato1999}). 
The law of a gamma process is uniquely determined by its mean and variance at time 1, which are both positive. 
Let $\{\g_t\}$ be a gamma process with mean and variance $m>0$ at time 1; then $\E(\g_t)=mt$, and $\var(\g_t)=mt$.
The density of $\g_t$ is $f_t(x;m)=  \1_{\{x>0\}}x^{mt-1}\e^{-x}/\G(mt)$.

A gamma bridge is a gamma process conditioned to have a fixed value at a fixed future time. 
A gamma bridge is a \levy bridge, and hence a Markov process. 
Let $\{\g_{tT}\}_{0\leq t\leq T}$ be a gamma bridge identical in law to the gamma process $\{\g_t\}$ pinned to the value 1 at time $T$. 
The transition law of $\{\g_{tT}\}$ is given by
\begin{equation}
	\P\left(\g_{tT}\in \dd y \left|\, \g_{sT}=x\right.\right)=\1_{\{x<y< 1\}}\frac{\left(\frac{y-x}{1-x}\right)^{m(t-s)-1} \left(\frac{1-y}{1-x}\right)^{m(T-t)-1}}
													{(1-x)\mathrm{B}(m(t-s),m(T-t))} \d y, \label{eq:gambrtrans}
\end{equation}
for $0\leq s <t\leq T$ and $x\geq 0$.
Here $\mathrm{B}(\a,\b)$ is the beta function \citep[6.2]{AS1964}.
We say that $m$ is the \emph{activity parameter} of $\{\g_{tT}\}$.
If the gamma bridge $\{\g_{tT}\}$ has reached the value $x$ at time $s$, then it must yet travel a distance $1-x$ over the time period $(s,T]$.
Equation (\ref{eq:gambrtrans}) shows that the proportion of this distance that the gamma bridge will cover over $(s,t]$ is a random variable with a beta distribution.

It is a property of gamma processes that the renormalised process $\{\g_t/\g_T\}_{0\leq t \leq T}$ is independent of $\g_T$.
This leads to the remarkable identity $\left\{ \g_t/\g_T \right\}\law \{\g_{tT}\}$, which we refer to as the \emph{ratio property} of the gamma bridge.
It follows that the joint distribution of increments of a gamma bridge is Dirichlet.

\begin{defn}
	\label{def:GRB}
	The process $\{\G_{t}\}_{0\leq t \leq T}$ is a \emph{gamma random bridge (GRB)} if
	\begin{equation} \label{eq:GRB}
		\{\G_{t}\}\law \{R \g_{tT}\},
	\end{equation}
	for $R>0$ a random variable, and $\{\g_{tT}\}$ a gamma bridge independent of $R$.
	We say that $\{\G_{t}\}$ has \emph{generating law $\nu$} and \emph{activity parameter $m$}, where of $\nu$ is the law of $R$ and $m$ is the activity parameter of $\{\g_{tT}\}$.
\end{defn}

Suppose that $\{\G_t\}$ is a GRB satisfying \textnormal{(\ref{eq:GRB})}.
If $\P(R=z)=1$ for some $z>0$, then $\{\G_t\}$ is a gamma bridge.
If $R$ is gamma random variable with shape parameter $mT$ and scale parameter $\k$, 
then $\{\G_t\}$ is a gamma process such that $\E(\G_t)=m\k t$ and $\var(\G_t)=m\k^2t$, for $t\in[0,T]$.

GRBs fall within the class of \levy random bridges described in \citep{HHM1}. 
The process $\{\G_{t}\}$ is a Markov process with stationary increments, and is identical in law to a gamma process defined over $[0,T]$ conditioned to have the law of $R$ at time $T$.
The bridges of a GRB are gamma bridges.
Since increments of a gamma bridge have a Dirichlet distribution, it follows that the increments of a GRB have a multivariate Liouville distribution.

Define the subprocesses $\{\xi^{(i)}_{t}\}_{0\leq t \leq T_i}$, $i=1,\ldots,n$, by
\begin{align*}
	\xi^{(i)}_t&=\G_{s_i+t}-\G_{s_i}, && \text{for $t\in[0,T_i]$},
	\\\xi^{(i)}_t&=\xi^{(i)}_{T_i}, && \text{for $t>T_i$},
\end{align*}
where the intervals $[s_i,s_i+T_i]$, $i=1,\ldots,n$, are non-overlapping except possibly at the endpoints. 
It follows from \citep[Corollary 3.12]{HHM1} that each $\{\xi^{(i)}\}$ is a GRB with generating law
\[ \nu^{(i)}(\dd x)=\frac{x^{mT_i-1}}{\B(mT_i,m(T-T_i))} \int_{z=x}^{\infty} z^{1-mT}(z-x)^{m(T-T_i)-1} \nu(\dd z) \d x. \]
Furthermore, we can construct an $n$-dimensional Markov process $\{\xib_t\}$  by setting $\xib_t=(\xi^{(1)}_t,\ldots,\xi^{(n)}_t)^\tp$.

\section{Archimedean survival process} \label{sec:ASP}
We construct an ASP by splitting a gamma random bridge into $n$ non-overlapping subprocesses. 
We start with a `master' GRB $\{\G_t\}_{0\leq t \leq n}$ with activity parameter $m=1$ and generating law $\nu$, where $n\in\N_+$, $n\geq 2$. 
In this section, we write $f_t(x)$ for the gamma density with shape parameter unity and scale parameter unity.
That is $f_t(x)=f_t(x;1)=x^{t-1} \e^{-x}/\G(t)$.

\begin{defn}
	The process $\{\xib_{t}\}_{0\leq t \leq 1}$ is an \emph{$n$-dimensional Archimedean survival process} if
	\begin{equation*}
		\{\xib_t\}_{0\leq t \leq 1}\law\left\{\left[ \begin{aligned}
					& \G_{t}-\G_{0}
					\\ & \vdots
					\\ & \G_{(i-1)+t}-\G_{i-1}
					\\ & \vdots
					\\ & \G_{(n-1)+t}-\G_{n-1}
			\end{aligned} \right]\right\}_{0\leq t \leq 1},
	\end{equation*}
	where $\{\G_t\}_{0\leq t \leq n}$ is a gamma random bridge with activity parameter $m=1$. 
	We say that the generating law of $\{\G_t\}$ is the \emph{generating law} of $\{\xib_{t}\}$.
\end{defn}

\begin{figure}[ht]
	\begin{center}
		\subfigure[GRB]{\includegraphics[scale=0.2]{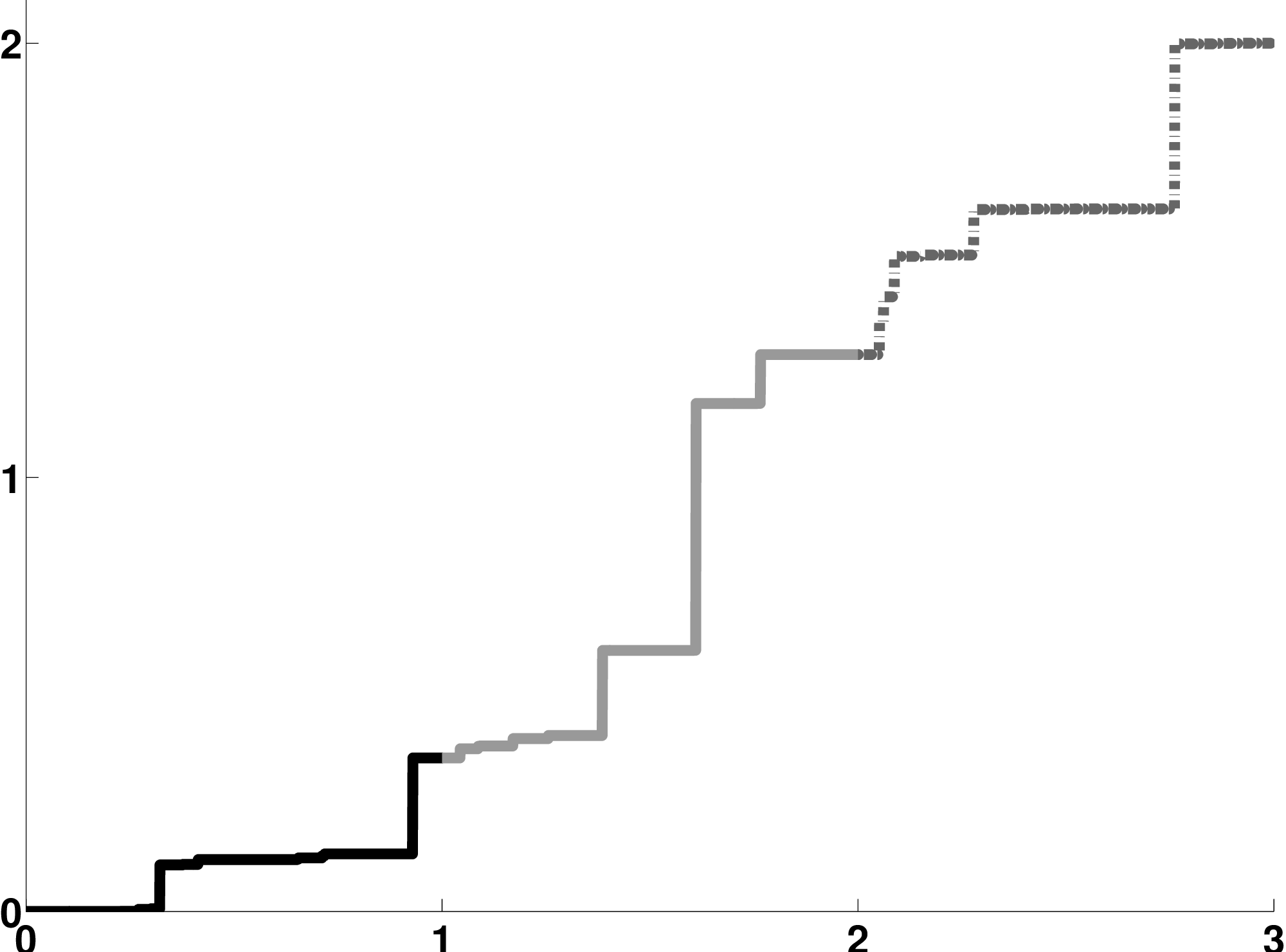}}
		\subfigure[ASP]{\includegraphics[scale=0.2]{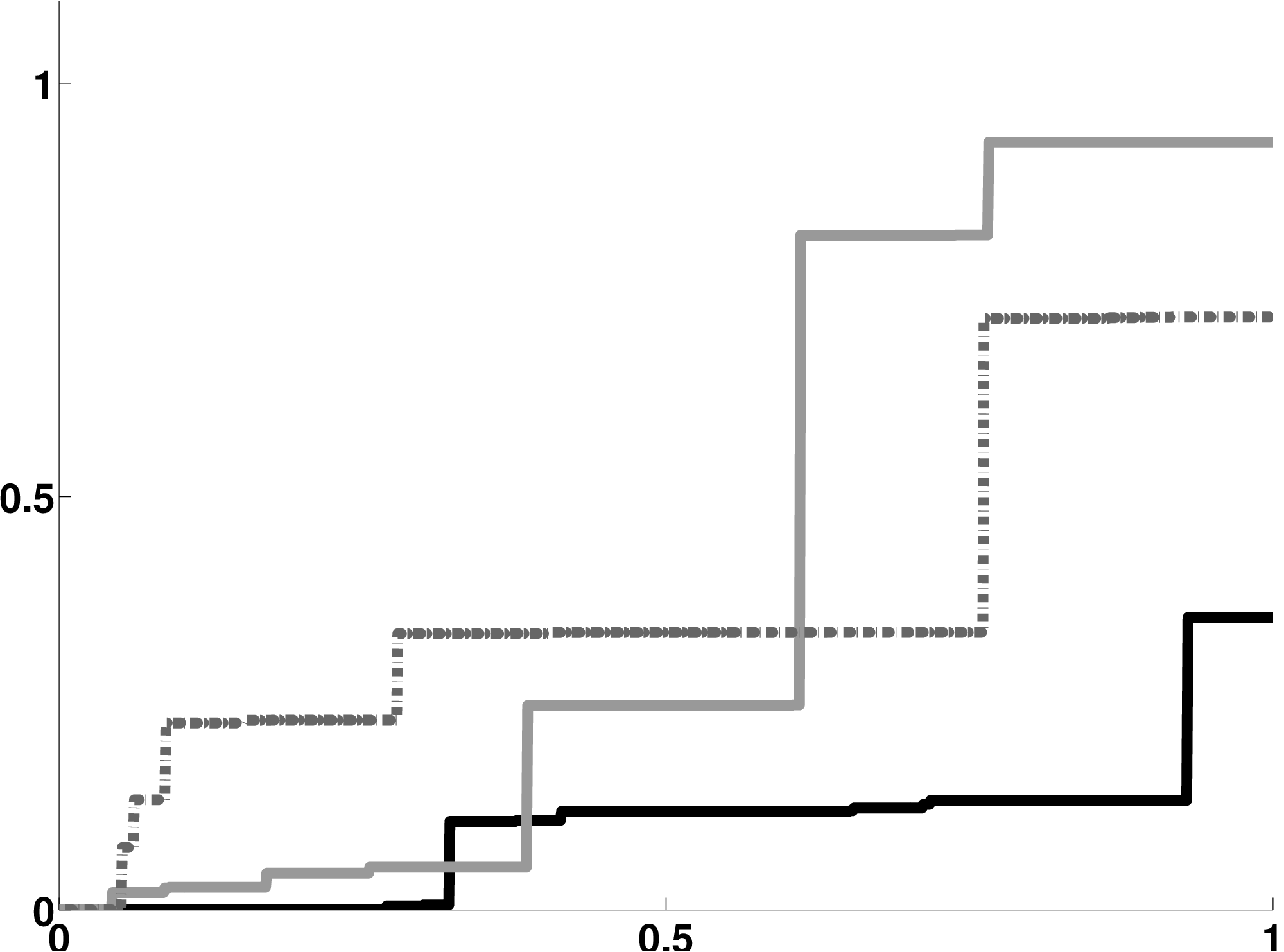}}
	\end{center}
	\caption{%
			A graphical representation of the construction of an ASP.
			A GRB is split into three subprocesses, each spanning a time interval of unit length.
			The subprocesses are spatially and temporally transformed so that they start at value zero at time zero, and terminate at unit time.
			}
	\label{fig:surprise}
\end{figure}

Note that from Definition \ref{def:GRB} $\P(\G_n=0)=0$, and so $\P(\xib_1=\mathbf{0})=0$.
Each one-dimensional marginal process of an ASP is a subprocess of a GRB, and hence a GRB. 
Thus ASPs are a multivariate generalisation of GRBs.
We defined ASPs over the time interval $[0,1]$;
it is straightforward to restate the definition to cover an arbitrary closed interval.

\begin{prop}
	The terminal value of an ASP has an Archimedean survival copula.
\end{prop}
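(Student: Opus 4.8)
The plan is to identify the terminal value $\xib_1$ as a multivariate $\ell_1$-norm symmetric random vector and then appeal directly to the result of \citet{MN2009} recalled above: that any $\ell_1$-norm symmetric distribution with no atom at the origin has an Archimedean survival copula. Since the components of the terminal value are $\xi^{(i)}_1=\G_i-\G_{i-1}$, the vector $\xib_1$ is precisely the vector of increments of the master GRB $\{\G_t\}_{0\le t\le n}$ over the $n$ unit intervals $[i-1,i]$. It therefore suffices to show that this increment vector is distributed as $R\mathbf{U}$, where $\mathbf{U}$ is uniform on the simplex $S$ and is independent of a random variable $R$ with law $\nu$.

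First I would invoke the GRB decomposition of Definition \ref{def:GRB}, writing $\{\G_t\}\law\{R\g_{tn}\}$ with $R\sim\nu$ independent of the gamma bridge $\{\g_{tn}\}$ pinned to $1$ at time $n$. Setting $\D_i=\g_{in}-\g_{(i-1)n}$, this gives $\xib_1\law R(\D_1,\ldots,\D_n)^\tp$. The increments satisfy $\sum_{i=1}^n\D_i=\g_{nn}-\g_{0n}=1$, so $(\D_1,\ldots,\D_n)^\tp$ lives on $S$. By the Dirichlet property of gamma-bridge increments established earlier, and because the master GRB has activity parameter $m=1$ while each interval has unit length, the shape parameters are $\a_i=m(t_i-t_{i-1})=1$ for every $i$. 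Hence $(\D_1,\ldots,\D_n)^\tp$ has a symmetric Dirichlet distribution with parameter vector $(1,\ldots,1)^\tp$, which is exactly the uniform distribution on $S$.

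Combining these observations yields $\xib_1\law R\mathbf{U}$ with $\mathbf{U}$ uniform on $S$ and independent of $R\sim\nu$, so by Definition \ref{defn:ell1} the terminal value has a multivariate $\ell_1$-norm symmetric distribution with generating law $\nu$. Because $\P[\G_n=0]=0$ we have $\P[\xib_1=\mathbf{0}]=0$, so the marginal survival function $\bar{F}$ is continuous and the $\ell_1$-norm symmetric result of \citet{MN2009} applies verbatim, producing an Archimedean survival copula with generator $h=\bar{F}$. I do not expect a genuine obstacle, since every ingredient is supplied by the preliminaries; the one point needing care is the reduction to the uniform law on $S$, namely checking that $m=1$ together with unit-length intervals collapses the general Dirichlet increment law to the symmetric-uniform case, and that the independence of $R$ from the bridge carries the scaling factor cleanly through to the $\ell_1$-norm symmetric representation.
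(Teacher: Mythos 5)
Your proof is correct and takes essentially the same route as the paper: both reduce the terminal value to the representation $R\mathbf{U}$ with $\mathbf{U}$ uniform on the simplex $S$ (using the GRB decomposition and the unit-length, $m=1$ structure of the splitting), and then invoke the correspondence between $\ell_1$-norm symmetric distributions and Archimedean survival copulas. The only cosmetic difference is how uniformity on $S$ is justified—the paper writes the normalised increments as $\mathbf{E}/\|\mathbf{E}\|$ for i.i.d.\ unit-rate exponentials, whereas you invoke the Dirichlet increment property with all parameters equal to one—which is the same fact from the preliminaries in different clothing.
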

\begin{proof}
	Let $\{\xib_{t}\}$ be an $n$-dimensional ASP with generating law $\nu$. Then we have
	\begin{align*}
		\P(\xib_1\in\dd \mathbf{x})&=\P\left(\G_1-\G_0\in\dd x_1,\ldots,\G_n-\G_{n-1}\in\dd x_n\right)
		\\ &=\P\left(R\frac{\g_1-\g_0}{\g_n}\in\dd x_1,\ldots,R\frac{\g_n-\g_{n-1}}{\g_n}\in\dd x_n\right),
	\end{align*}
	for $\mathbf{x}\in\R^n$, where $R$ is a random variable with law $\nu$ and $\{\g_t\}$ is a gamma process, independent of $R$, such that $\g_t$ has density $f_t(x)=x^{t-1} \e^{-x}/\G(t)$.
	Each increment $\g_i-\g_{i-1}$ has an exponential distribution (with unit rate). 
	Thus $\P(\xib_1\in B)=\P(R \,\mathbf{E}/\|\mathbf{E}\| \in B)$,
	for $\mathbf{E}$ an $n$-vector of independent, identically-distributed, exponential random variables.
	Hence $\xib_1$ has a multivariate $\ell_1$-norm symmetric distribution. Therefore, it has an Archimedean survival copula. 
\end{proof}

\begin{rem}
	Let $g_{i}:\R_{+}\rightarrow\R$ be strictly decreasing for $i=1,\ldots,n$, and let $\{\xib_t\}$ be an ASP. 
	Then the vector-valued process $\{(g_{1}(\xi^{(1)}_{t}),\ldots,g_{n}(\xi^{(n)}_{t}))^\tp\}_{0\leq t \leq 1}$
	has an Archimedean copula at time $t=1$.
\end{rem}

\subsection{Characterisations} \label{subsec:charac}
In this subsection we shall characterize ASPs first through their finite-dimensional distributions, and then through their transition probabilities.

We shall show that the joint distribution of increments of an ASP are multivariate Liouville.
To this end, we first show that the joint distribution of increments of a GRB are multivariate Liouville.
The finite-dimensional distributions of the master process $\{\G_t\}$ are given by
\begin{equation*}
	\P(\G_{t_1} \in \dd x_1, \ldots, \G_{t_k} \in \dd x_k, \G_n\in \dd z)
	 = \P(\G_{t_1} \in \dd x_1, \ldots, \G_{t_k} \in \dd x_k \,|\, \G_n=z) \, \nu(\dd z),
\end{equation*}
where $x_0=0$, for all $k\in\N_+$, all partitions $0=t_0<t_1<\cdots<t_k<n$, all $z\in\R_+$, and all $(x_1,\ldots,x_k)^\tp=\mathbf{x} \in \R_+^k$.
It was mentioned earlier that the bridges of a GRB are gamma bridges.
Hence, for $\{\g_t\}$ a gamma process such that $\E(\g_1)=1$ and $\var(\g_1)=1$, we have
\begin{equation*}
	\P[\G_{t_1} \in \dd x_1, \ldots, \G_{t_k} \in \dd x_k, \G_n\in \dd z]
	= \P[\g_{t_1} \in \dd x_1, \ldots, \g_{t_k} \in \dd x_k \,|\, \g_n=z] \, \nu(\dd z).
\end{equation*}
Using the ratio property of the gamma bridge and (\ref{eq:GRB}), we have
\begin{equation*}
	(\G_{t_1}-\G_{t_0},\ldots,\G_{t_k}-\G_{t_{k-1}},\G_n-\G_{t_k}) \law \frac{R}{\g_n}(\g_{t_1}-\g_{t_0},\ldots,\g_{t_k}-\g_{t_{k-1}},\g_n-\g_{t_k}).
\end{equation*}
Hence $(\G_{t_1}-\G_{t_0},\ldots,\G_{t_k}-\G_{t_{k-1}},\G_n-\G_{t_k})^{\tp}$
has a multivariate Liouville distribution with generating law $\nu$ and parameter vector $(t_1-t_0,\ldots,t_k-t_{k-1},n-t_k)^\tp$.
 
We can use these results to characterise the law of the ASP $\{\xib_t\}$ through the joint distribution of its increments.
Fix $k_i\geq 1$ and the partitions $0=t_0^i<\cdots<t_{k_i}^i=1$, for $i=1,\ldots,n$. 
Then define the non-overlapping increments $\{\D_{ij}\}$ by $\D_{ij}=\xi^{(i)}_{t^i_j}-\xi^{(i)}_{t^i_{j-1}}$, for $j=1,\ldots,k_i$ and $i=1,\ldots,n$.
The distribution of the $k_1\cdot \cdots \cdot k_n$-element vector $\boldsymbol{\D}=(\D_{11},\ldots,\D_{1k_1},\ldots,\D_{n1},\ldots,\D_{nk_n})^\tp$
characterises the finite-dimensional distributions of the ASP $\{\xib_t\}$. 
Thus it follows from the Kolmogorov extension theorem that the distribution of $\boldsymbol{\D}$ characterises the law of $\{\xib_t\}$. 
Note that $\boldsymbol{\D}$ contains non-overlapping increments of the master GRB $\{\G_t\}$ such that $\|\boldsymbol{\D}\|=\G_n$.
Hence $\boldsymbol{\D}$ has a multivariate Liouville distribution with parameter vector
$\boldsymbol{\a}=(t^1_1-t^1_0,\ldots,t^1_{k_1}-t^1_{k_1-1},\ldots,t^n_1-t^n_0,\ldots,t^n_{k_n}-t^n_{k_n-1})^\tp$,
and the generating law $\nu$.

\bigskip

We denote the filtration generated by $\{\xib_t\}_{0\leq t \leq 1}$ by $\{\F_t\}$.
Then $\{\xib_t\}$ is a Markov process with respect to $\{\F_t\}$.

For a set $B\subset\R$ and a constant $x\in\R$, we write $B+x$ for the shifted set $B+x=\{y\in\R: y-x\in B\}$.
We define the process $\{R_t\}_{0\leq t \leq 1}$ by setting
\begin{equation*}
	R_t=\sum_{i=1}^n \xi^{(i)}_t=\|\xib_t\|.
\end{equation*}
Note that the terminal value of $\{R_t\}$ is the terminal value of the master process $\{\G_t\}$, i.e.~$R_1=\G_n$.
We define a family of unnormalised measures, indexed by $t\in[0,1)$ and $x\in\R_+$, as follows: $\th_0(B;x)=\nu(B)$ and
\begin{align*}
	\th_t(B;x)=\int_B \frac{f_{n(1-t)}(z-x)}{f_n(z)} \, \nu(\dd z),
\end{align*}
for $B\in\Borel$.
We also write $\Th_t(x)=\th_t([0,\infty);x)$.

\begin{prop} \label{prop:transition}
	The ASP $\{\xib_t\}$ is a Markov process with the transition law given by
	\begin{multline}
		\label{eq:ASP1}
		\P\left(\left. \xi_1^{(1)}\in\dd z_1,\ldots, \xi_1^{(n-1)}\in\dd z_{n-1},\xi_1^{(n)}\in B  \,\right| \xib_s=\mathbf{x} \right)=
		\\ \frac{\th_{\t(s)}(B+\sum_{i=1}^{n-1}z_i;x_n+\sum_{i=1}^{n-1}z_i)}{\Th_s(\|\mathbf{x}\|)}
									\prod_{i=1}^{n-1}\frac{(z_i-x_i)^{-s}\e^{-(z_i-x_i)}}{\G(1-s)}\d z_i,
	\end{multline}
	and
	\begin{equation}
		\label{eq:ASP2}
		\P\left( \xib_t\in \d\mathbf{y}  \,|\, \xib_s=\mathbf{x} \right)=
					\frac{\Th_{t}(\|\mathbf{y}\|)}{\Th_s(\|\mathbf{x}\|)}
					\prod_{i=1}^{n}\frac{(y_i-x_i)^{(t-s)-1}\e^{-(y_i-x_i)}}{\G(t-s)}\d y_i, 
	\end{equation}
	where $\t(t)=1-(1-t)/n$, $0\leq s<t<1$, and $B\in\Borel$.
\end{prop}
\begin{proof}
	We begin by verifying (\ref{eq:ASP1}).
	From the Bayes theorem we have
	\begin{multline}
		\label{eq:A}
		\P\left(\left. \xi_1^{(1)}\in\dd z_1,\ldots, \xi_1^{(n-1)}\in\dd z_{n-1},\xi_1^{(n)}\in B  \,\right| \xib_s=\mathbf{x} \right)=
		\\ \frac{\P\left(\xi_1^{(1)}\in\dd z_1,\ldots, \xi_1^{(n-1)}\in\dd z_{n-1},\|\xib_1\|\in B+\sum_{i=1}^{n-1} z_i, \xib_s\in\d\mathbf{x} \right)}
						{\P\left(\xib_s\in\d\mathbf{x} \right)}.		
	\end{multline}
	From \citep[Section 3.2]{HHM1} we have
	\begin{equation}
		\P(\G_{t_1} \in \dd x_1, \ldots, \G_{t_k} \in \dd x_k, \G_n\in \dd z)
		= \prod_{i=1}^k\{f_{t_i-t_{i-1}}(x_i-x_{i-1}) \d x_i\} \th_{t_k/n}(\dd z;x_k). \label{eq:useful}
	\end{equation}
	The law of $R_1=\|\xib_1\|$ is $\nu$; hence using (\ref{eq:useful}) the numerator of (\ref{eq:A}) is
	\begin{multline}
		\label{eq:num}
		\int_{u\in B+\sum_{i=1}^{n-1} z_i}
			\P\left(\left.\xi_1^{(1)}\in\dd z_1,\ldots, \xi_1^{(n-1)}\in\dd z_{n-1}, \xib_s\in\d\mathbf{x} \,\right| R_1=u \right) \nu(\dd u)=
		\\ \prod_{i=1}^{n}\{f_{s}(x_i)\d x_i\} \prod_{i=1}^{n-1}\{f_{1-s}(z_i-x_i)\d z_i\} 
		\int_{u \in B+\sum_{i=1}^{n-1} z_i} \frac{f_{1-s}(u-x_n-\sum_{i=1}^{n-1}z_i)}{f_n(u)} \,\nu(\dd u),
	\end{multline}
	and the denominator is
	\begin{align}
		\P\left(\xib_s\in\d\mathbf{x}\right)&=\P\left(\G_s\in \dd x_1, \G_{1+s}-\G_{1}\in \dd x_2,\ldots, \G_{n-1+s}-\G_{n-1}\in \dd x_n\right) \nonumber
		 \\ &=\P\left(\G_s\in \dd x_1,\G_{2s}-\G_{s}\in \dd x_2, \ldots, \G_{ns}-\G_{(n-1)s}\in \dd x_n\right) \label{eq:cyclic}
		 \\ &=\prod_{i=1}^{n}\{f_{s}(x_i)\d x_i\} \int_{z=\|\mathbf{x}\|}^\infty  \th_{s}(\dd z;\|\mathbf{x}\|). \label{eq:denom_new}
	\end{align}
	In (\ref{eq:num}) we have used the fact that, given $\|\xib_1\|=R_1$, $\{\xib_t\}$ is a vector of subprocesses of a gamma bridge.
	Equation (\ref{eq:cyclic}) follows from the stationary increments property of GRBs and (\ref{eq:denom_new}) follows from (\ref{eq:useful}).
	Dividing (\ref{eq:num}) by (\ref{eq:denom_new}) yields the claim.

	We shall now verify (\ref{eq:ASP2}) following similar steps.
	We have
	\begin{equation}
		\label{eq:B}
		\P(\xib_t\in\dd \mathbf{y} \,|\, \xib_s= \mathbf{x})
			=\frac{\P(\xib_t\in\dd \mathbf{y} , \xib_s\in\dd \mathbf{x} )}{\P(\xib_s\in\dd \mathbf{x})}.
	\end{equation}
	The numerator of (\ref{eq:B}) is
	\begin{multline}
		\label{eq:num2}
		\int_{z=0}^{\infty} \P\left(\xib_t\in\dd \mathbf{y} , \xib_s\in\dd \mathbf{x}  \,|\, R_1=z \right) \, \nu(\dd z)=
		\\ \prod_{i=1}^n\{f_{s}(x_i) \d x_i\} \prod_{i=1}^n\{f_{t-s}(y_i-x_i) \d y_i\} \int_{z=0}^{\infty} \frac{f_{n(1-t)}(z-\|\mathbf{y}\|)}{f_n(z)} \, \nu(\dd z),
	\end{multline}
	and the denominator is given in (\ref{eq:denom_new}).
	Dividing (\ref{eq:num2}) by (\ref{eq:denom_new}) yields	the result.
\end{proof}

\begin{rem} \label{rem:ASPtransitiondensity}
	When the generating law $\nu$ admits a density $p$, \textnormal{(\ref{eq:A})} is equivalent to 
	\begin{equation}
		\label{eq:nu_den_td}
		\P\left( \xib_1\in \d\mathbf{z} \,|\, \xib_s=\mathbf{x} \right)=
				\frac{\G(n) \e^{\|\mathbf{x}\|} p(\|\mathbf{z}\|)}{\Th_s(\|\mathbf{x}\|)\|\mathbf{z}\|^{n-1}} \prod_{i=1}^n \frac{(z_{i}-x_i)^{-s}}{\G(1-s)}\d z_i.
	\end{equation}
\end{rem}

\subsubsection{Increments of ASPs}
We shall show that increments of an ASP have $n$-dimensional Liouville distributions.
Indeed, at time $s\in[0,1)$, the increment $\xib_t-\xib_s$, $t\in(s,1]$, has a multivariate Liouville distribution with a generating law that can be expressed in terms of the $\xib_s$-conditional law of the norm variable $R_t=\|\xib_t\|$.
Before we show this, we first examine the law of the process $\{R_t\}$.

We define the measure $\nu_{st}$, $0\leq s<t\leq 1$, by
\begin{equation*}
	\nu_{st}(B)=\P(R_t \in B \,|\, \xib_s), \qquad \text{for $B\in\Borel$}.
\end{equation*}

\begin{prop} \label{prop:R_GRB}
	The process $\{R_t\}_{0\leq t\leq T}$ is a GRB with generating law $\nu$ and activity parameter $n$.
	That is,
	\begin{align}
					\nu_{st}(\dd r)&= \frac{\Th_{t}(r)}{\Th_{s}(\|\xib_s\|)}
																	\frac{(r-\|\xib_s\|)^{n(t-s)-1}\exp\{-(r-\|\xib_s\|)\}}{\G(n(t-s))} \dd r, \label{eq:nu_st}
			\\\intertext{and}
			\nu_{s1}(\dd r)&= \frac{\th_{s}(\dd r;\|\xib_s\|)}{\Th_{s}(\|\xib_s\|)}, \label{eq:nu_s1}
	\end{align}
	for $0<s<t<1$.
\end{prop}

After simplification, (\ref{eq:nu_st}) and (\ref{eq:nu_s1}) are consistent with the transition probabilities given for a GRB in \citep[Section 3.4]{HHM1}.

\begin{proof}
	Since $\{\xib_t\}$ is a Markov process with respect to $\{\F_t\}$, $\{R_t\}$ is a Markov process with respect to $\{\F_t\}$.
	Thus to prove the proposition we need only verify that the transition probabilities of $\{R_t\}$ match those given in (\ref{eq:nu_st}) and (\ref{eq:nu_s1}).
	We achieve this using Bayes' theorem.
	
	The $\xib_s$-conditional law of $R_1$ follows from (\ref{eq:denom_new}):
	\[
		\P(R_1\in \dd r\,|\, \xib_s=\mathbf{x}) = \frac{\P(\xib_s\in\dd \mathbf{x} , R_1=r)}
																			{\P(\xib_s\in\dd \mathbf{x})} 
							= \frac{\th_{s}(\dd r;\|\mathbf{x}\|)}{\Th_{s}(\|\mathbf{x}\|)}.
	\]	
	The $\xib_s$-conditional law of $R_t$ for $t\in(s,1)$ is
	\begin{align*} 
		\P(R_t\in \dd r\,|\, \xib_s=\mathbf{x})
		&= \frac{\int^{\infty}_{z=0}\P(\xib_s\in\dd \mathbf{x}, R_t\in\dd r  \,|\, R_1=z)\P(R_1\in \d z)}
						{\P(\xib_s\in\dd \mathbf{x})  } 
		\\	&= \frac{\int^{\infty}_{z=0}\frac{1}{f_n(z)}f_{n(t-s)}(r-\|\mathbf{x}\|)f_{n(1-t)}(z-r)\dd r \, \nu(\dd z)}
								{\Th_{s}(\|\mathbf{x}\|)}
		\\ 	&= \frac{\Th_{t}(r)}{\Th_{s}(\|\mathbf{x}\|)}f_{n(t-s)}(r-\|\mathbf{x}\|)\dd r.
	\end{align*}
\end{proof}

Note that $\P(R_t\in \dd r\,|\, \xib_s)=\P(R_t\in \dd r\,|\, R_s)$ for $t\in(s,1]$.
This is not surprising since $\{R_s\}$ is a GRB, and hence a Markov process with respect to its natural filtration.

When $\nu_{st}$ admits a density, we denote it by $p_{st}(r)=\nu_{st}(\dd r)/\dd r$.
We see from (\ref{eq:nu_st}) that $p_{st}$ exists for $t<1$.
It follows from the definition of $\th_s$ that $p_{s1}$ only exists if $\nu$ admits a density.

\begin{prop} \label{prop:incLiouville}
	Fix $s\in[0,1)$.
	Given $\xib_s$, the increment $\xib_t-\xib_s$, $t\in(s,1]$, has an $n$-variate Liouville distribution with generating law
	\begin{equation}  \label{eq:nu_star}
		\nu^*(B)=\nu_{st}(B+R_s), \qquad (B\in\Borel),
	\end{equation}
	and parameter vector $\a=(t-s,\ldots,t-s)^\tp$.
\end{prop}
\begin{proof}	
	Consider the case $t=1$ when $\nu$ admits a density $p$.
	In this case the density $p_{s1}$ exists.
	From (\ref{eq:nu_den_td}) and (\ref{eq:nu_s1}), we have
	\begin{align}
		\P(\xib_1-\xib_s \in\dd \mathbf{y}\,|\, \xib_s)&=\frac{\G(n)\e^{R_s}p(\|\mathbf{y}\|+R_s)}{\Th_s(R_s)(\|\mathbf{y}\|+R_s)^{n-1}}
																											\prod_{i=1}^n \frac{y_i^{-s}}{\G(1-s)} \d y_i \nonumber
		\\ &=\frac{\G(n(1-t))p_{s1}(\|\mathbf{y}\|+R_s)}{\|\mathbf{y}\|^{n(1-t)-1}} \prod_{i=1}^n \frac{y_i^{-s}}{\G(1-s)} \d y_i. \label{eq:tran_den}
	\end{align}
	Comparing (\ref{eq:tran_den}) to (\ref{eq:MLD}) shows it to be the law of Liouville distribution 
	with generating law $p_{s1}(x+R_s)\dd x$ and parameter vector $(1-s,\ldots,1-s)^\tp$, as required.
	
	The case when $t<1$ is similar since the density $p_{st}$ exists.
	
	For the final case where $t=1$ and $\nu$ has no density we only outline the proof since the details are far from illuminating.
	Given $\xib_s$, the law of $\xib_1-\xib_s$ is characterised by (\ref{eq:ASP1}).
	We then need to show that this law is equal to the law of $X\mathbf{D}$, 
	where $X$ is a random variable with law $\nu^*$ given by (\ref{eq:nu_star}), 
	and $\mathbf{D}$ is a Dirichlet random variable, independent of $X$, with parameter vector $(1-s,\ldots,1-s)^{\tp}$.
	This is possible by mixing a Dirichlet density with the random scale parameter $X$.
	
\end{proof}

\subsection{Moments}
In this subsection we fix a time $s\in [0,1)$, and we assume that the first two moments of $\nu$ exist and are finite.
\begin{prop}
	The first- and second-order moments of $\xib_t$, $t\in(s,1]$, are 
	\begin{align*}
		&\textnormal{(a)}
		&& \E\left(\left.\xi^{(i)}_{t}\,\right| \xib_s\right)
		 =\frac{1}{n}\mu_1 +\xi_{s}^{(i)},
		\\
		&\textnormal{(b)}
		&& \var\left(\left.\xi^{(i)}_{t}\,\right| \xib_{s}\right)
		 =\frac{1}{n}\left[\left\{\frac{t-s+1}{n(t-s)+1}\right\}\mu_2-\frac{1}{n}\mu_1^2\right],
		\\
		&\textnormal{(c)}
		&& \cov\left(\left.\xi^{(i)}_{t}, \xi^{(j)}_{t} \,\right| \xib_{s}\right)
		 =\frac{t-s}{n}\left\{\frac{\mu_2}{n(t-s)+1}-\frac{\mu_1^2}{n(t-s)}\right\}, \quad (i\ne j),
	\end{align*}
where
	\begin{align*}
		\mu_1&=\frac{t-s}{1-s}\left\{\E(R_1  \,|\, R_{s})-R_{s}\right\},
		\\ \mu_2&=\frac{(t-s)\{1+n(t-s)\}}{(1-s)\{1+n(1-s)\}}\E((R_1-R_s)^2  \,|\, R_{s}).
	\end{align*}
\end{prop}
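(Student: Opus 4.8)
The plan is to apply the preceding proposition, which identifies the $\xib_s$-conditional law of the increment $\xib_t-\xib_s$ as a multivariate Liouville distribution with parameter vector $(t-s,\ldots,t-s)^\tp$ and generating law $\nu^*$ from (\ref{eq:nu_star}), and then to read off the moments from the general Liouville formulas (\ref{eq:Expected})--(\ref{eq:Covar}). Since each $\xi^{(i)}_s$ is $\F_s$-measurable, the $\xib_s$-conditional mean of $\xi^{(i)}_t$ differs from that of the increment $\xi^{(i)}_t-\xi^{(i)}_s$ only by the additive constant $\xi^{(i)}_s$, while the conditional variances and covariances of $\xib_t$ and of $\xib_t-\xib_s$ coincide. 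Thus it would suffice to substitute $\a_i=t-s$ and $\|\ab\|=n(t-s)$ into (\ref{eq:Expected})--(\ref{eq:Covar}); the factors $\a_i/\|\ab\|=1/n$ and $\a_i\a_j/\|\ab\|=(t-s)/n$ then reproduce exactly the right-hand sides of statements 1--3, expressed through $\mu_1=\E_{\nu^*}[R]$ and $\mu_2=\E_{\nu^*}[R^2]$.

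First I would note that, because $\mathbf{D}$ lies on the simplex $S$, the generating variable $R$ of the increment satisfies $R=\|\xib_t-\xib_s\|=R_t-R_s$; hence $\mu_1$ and $\mu_2$ are the first two $\xib_s$-conditional moments of $R_t-R_s$. By Proposition \ref{prop:R_GRB} the process $\{R_t\}$ is a GRB with activity parameter $n$, and by its Markov property these conditional moments depend on $\xib_s$ only through $R_s$.

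The main computation, and the only nontrivial step, is the evaluation of $\mu_1$ and $\mu_2$. Here I would condition additionally on the terminal value $R_1$. Conditional on $R_1=r$, the GRB $\{R_t\}$ is a gamma bridge scaled to reach $r$ at time $1$ with activity $n$, so by (\ref{eq:gambrtrans}) (with $m=n$ and $T=1$) the proportion $W=(R_t-R_s)/(R_1-R_s)$ has a beta distribution with parameters $n(t-s)$ and $n(1-t)$, and this law depends on neither $R_s$ nor $R_1$; thus $W$ is independent of the pair $(R_s,R_1)$. Using $\E[W]=(t-s)/(1-s)$ and $\E[W^2]=n(t-s)(n(t-s)+1)/[n(1-s)(n(1-s)+1)]$, together with $R_t-R_s=W(R_1-R_s)$ and the tower property over $R_1$ given $R_s$, gives
\begin{align*}
	\mu_1&=\E[W]\,\big(\E[R_1\,|\,R_s]-R_s\big)=\frac{t-s}{1-s}\big(\E[R_1\,|\,R_s]-R_s\big),
	\\	\mu_2&=\E[W^2]\,\E[(R_1-R_s)^2\,|\,R_s]=\frac{(t-s)(1+n(t-s))}{(1-s)(1+n(1-s))}\E[(R_1-R_s)^2\,|\,R_s],
\end{align*}
after cancelling the common factor $n$ in $\mu_2$. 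Substituting these into the Liouville moment expressions completes the proof.

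The case $t=1$ is degenerate for the beta argument, since the second beta parameter $n(1-t)$ vanishes; but there $W\equiv 1$ and the increment equals $R_1-R_s$ directly, so the same formulas hold (and can alternatively be recovered by continuity). I expect the only delicate point to be the justification that the beta proportion $W$ is independent of the endpoints $(R_s,R_1)$, which rests on the bridge structure of the GRB established earlier around (\ref{eq:gamma_ratio}).
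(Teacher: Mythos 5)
Your proposal is correct and takes essentially the same approach as the paper: identify the $\xib_s$-conditional law of the increment as multivariate Liouville with parameter vector $(t-s,\ldots,t-s)^\tp$ and generating law $\nu^*$, read off the moments from (\ref{eq:Expected})--(\ref{eq:Covar}), and compute $\mu_1,\mu_2$ from the GRB structure of $\{R_t\}$ established in Proposition \ref{prop:R_GRB}. The only difference is cosmetic: where the paper cites two results from \citet{HHM1} (the linear formula for $\E[R_t\,|\,R_s]$ and the representation, given $R_s$, of $\{R_t-R_s\}$ as an independent gamma bridge scaled by a terminal random variable), you re-derive the same facts by conditioning on $R_1$ and using the beta law of the bridge proportion from (\ref{eq:gambrtrans}), which is the identical decomposition written as $W(R_1-R_s)$.
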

\begin{proof}
	Fix $0\leq s < t\leq 1$.
	From Proposition \ref{prop:incLiouville}, given $\xib_s$, 
	the increment $\xib_{t}-\xib_{s}$ has an $n$-dimensional Liouville distribution with generating law $\nu^{*}(A)=\nu_{st}(A+R_s)$, 
	and with parameter vector $(t-s,\ldots,t-s)^{\tp}$.
	Defining
	\begin{align*}
		\mu_1 &= \int_0^{\infty} y \, \nu^*(\dd y)
		= \int_{R_s}^{\infty} y \, \nu_{st}(\dd y)-R_s
		= \E(R_t  \,|\, \xib_{s})-R_s,
	\\\intertext{and}
		\mu_2 &= \int_0^{\infty} y^2 \, \nu^*(\dd y)
		= \int_{R_s}^{\infty} (y-R_s)^2 \, \nu_{st}(\dd y)
		= \E((R_t-R_s)^2 \,|\, \xib_{s}),
	\end{align*}
	the equations (a)-(c) in the statement of the proposition hold from \citep[Theorem 6.3]{FKN1990}.
	
	It remains to simplify the expressions for $\mu_1$ and $\mu_2$.
	For this we use two results about \levy random bridges.
	First, from \citep[Corollary 3.10]{HHM1} we can write
	\begin{align*}
	 	\E(R_t \,|\, R_{s})&=\frac{t-s}{1-s}\E(R_1  \,|\, R_{s})+\frac{1-t}{1-s} R_{s}.
	\end{align*}
	The expression for $\mu_{1}$ then follows directly. 
	Second, given $R_s$, the process $\{R_t-R_s\}_{s\leq t\leq 1}$ is a GRB with generating law $\bar{\nu}(B)=\nu_{s1}(B+R_s)$ and activity parameter $n$ \citep[Section 3.7]{HHM1}.
	Hence, given $R_s$,	
	\begin{equation*}
		\{R_t-R_s\}_{s\leq t\leq 1}\law \{X \g_{t1} \}_{s\leq t\leq 1},
	\end{equation*}
	where $X$ is a random variable with law $\bar{\nu}$, and $\{\g_{t1}\}_{s\leq t\leq 1}$ is a gamma bridge with activity parameter $n$, 
	independent of $X$, satisfying $\g_{s1}=0$ and $\g_{11}=1$.
	Note that $\g_{t1}$, $t\in(s,1)$, is a beta random variable with parameters $\a=n(t-s)$ and $\b=n(1-t)$.
	Thus
	\begin{align*}
		\E\left(\left.(R_t-R_s)^2\,\right| R_s\right)&=\E(\g_{t1}^{2})\E(X^2) 
				\\ &=\frac{(t-s)\{1+n(t-s)\}}{(1-s)\{1+n(1-s)\}} \E\left(\left.(R_1-R_s)^2\,\right| R_s\right).
	\end{align*}
\end{proof}

\subsection{Measure change}
In this section we show that $\{\Theta_t(R_t)^{-1}\}$ is a positive martingale with respect to the filtration $\{\F_t\}$.
Through $\{\Theta_t(R_t)^{-1}\}$, we are able to define a new measure $\Q$ under which the ASP $\{\xib_t\}$ is a vector of independent gamma processes.

Fix $0\leq s<t<1$.
Then we have
\begin{align*}
	\E_{\P}\left(\left.\Th_t(R_t)^{-1} \right| \F_s \right)&=\E_{\P}\left(\left. \Th_t(\|\xib_t\|)^{-1} \right| \xib_s \right)
	\\	&=\Th_s(R_s)^{-1}\prod_{i=1}^n\int_{\xi^{(i)}_s}^{\infty}\frac{(y_i-\xi_s^{(i)})^{(t-s)-1}\exp\left\{-(y_i-\xi^{(i)}_s)\right\}}{\G(t-s)} \dd y_i
	\\  &=\Th_s(R_s)^{-1}.
\end{align*}
Noting that $\Theta_0(x)=1$, we see that $\{\Theta_t(R_t)^{-1}\}_{0\leq t <1}$ is a Radon-Nikodym derivative process.
Hence we can define a probability measure $\Q$ by
\[ \left.\frac{\dd \Q}{\dd \P}\right|_{\F_t}=\Th_t(R_t)^{-1}, \qquad (0\leq t<1).\]

\begin{prop} \label{prop:Q}
	Under $\Q$, $\{\xib_t\}$ is a vector of $n$ independent gamma processes such that
	\begin{equation*}
		\Q(\xib_t \in\dd \mathbf{x})=\prod_{i=1}^n \frac{x_i^{t-1}}{\G(t)} \e^{-x_i} \d x_i,
	\end{equation*}
	for $t\in[0,1)$.
\end{prop}
\begin{proof}
	Since $\{\xib_t\}$ is Markov under $\Q$, it suffices to verify that the transition law of $\{\xib_t\}$ is that of a vector of $n$ independent gamma processes.
	For $0\leq s<t<1$, we have
	\begin{align*}
	\Q(\xib_t\in\dd \mathbf{x} \,|\,\xib_s)&=\E_{\Q}(\1\{\xib_t\in\dd \mathbf{x}\}\,|\,\xib_s)
		\\ &=\E_{\P}\left(\left. \frac{\Th_t(R_t)}{\Th_s(R_s)}\1\{\xib_t\in\dd \mathbf{x}\} \right| \xib_s \right)
		\\ &=\prod_{i=1}^n \frac{(x_i-\xi^{(i)}_s)^{(t-s)-1}\exp\{-(x_i-\xi_s^{(i)})\}}{\G(t-s)}\d x_i.
	\end{align*}
\end{proof}

\subsection{Independent gamma bridges representation}
The increments of an $n$-dimensional ASP are identical in law to a positive random variable 
multiplied by the Hadamard product of an $n$-dimensional Dirichlet random variable and a vector of $n$ independent gamma bridges. 
For notational convenience, in this subsection we denote a gamma bridge defined over $[0,1]$ as $\{\g(t)\}$ (instead of $\{\g_{t1}\}$). 

For vectors $\mathbf{x},\mathbf{y}\in\R^n$, we denote their Hadamard product as $\mathbf{x}\circ \mathbf{y}$.
That is,
\begin{equation*}
	\mathbf{x}\circ \mathbf{y}=(x_1y_1,\ldots,x_ny_n)^{\tp}.
\end{equation*}

\begin{prop}
	Given the value of $\xib_s$, the ASP $\{\xib_t\}$ satisfies the following identity in law:
	\[ \{\xib_t-\xib_s \}_{s\leq t\leq 1} \law \{R^* \, \mathbf{D} \circ \boldsymbol{\g}_{t} \}_{s\leq t\leq 1}, \]
	where
	\begin{enumerate}
		\item 
			$\mathbf{D}\in [0,1]^n$ is a symmetric Dirichlet random variable with parameter vector $(1-s,\ldots,1-s)^\tp$;
		\item 
			$\{\boldsymbol{\g}_{t}\}$ is a vector of $n$ independent gamma bridges, each with activity parameter $m=1$, 
			starting at the value 0 at time $s$, and terminating with unit value at time 1;
		\item
			$R^*>0$ is a random variable with law $\nu^*$ given by
			\[ \nu^*(A)=\nu_{s1}(A+R_s);\]
		\item 
			$R^*$, $\mathbf{D}$, and $\{\boldsymbol{\g}_{t}\}$ are mutually independent.
	\end{enumerate}
\end{prop}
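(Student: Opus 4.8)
The plan is to work conditionally on the terminal value $R_1=\G_n$ of the master process, and to exploit that, given this value, the ASP is simply a single gamma bridge split into $n$ non-overlapping blocks---an object whose law does not depend on the generating law $\nu$.

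First I would condition on the event $\{R_1=v\}$. By Definition \ref{def:GRB} the master GRB satisfies $\G_n=R$, so conditioning on $R_1=v$ is conditioning the master on $\{R=v\}$, and the conditioned master is identical in law to $\{v\,\g_{un}\}_{0\le u\le n}$, a gamma bridge to the value $v$ at time $n$ with activity $m=1$. Hence, given $R_1=v$, each increment $\xi^{(i)}_u-\xi^{(i)}_s$ is an increment of this single gamma bridge over the segment $[(i-1)+s,\,i]$.

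Next I would decompose these increments. Writing $a_i=\xi^{(i)}_s$ and $b_i=\xi^{(i)}_1-\xi^{(i)}_s$, the $2n$ consecutive bridge increments over the segments of lengths $s,1-s,\ldots,s,1-s$ are, after dividing by $v$, jointly Dirichlet with parameter vector $(s,1-s,\ldots,s,1-s)^\tp$, by the property that the increments of a gamma bridge are Dirichlet. The partition/neutrality property of the Dirichlet distribution then produces three mutually independent objects: the group totals $(\sum_i a_i,\sum_i b_i)=(R_s,\,R_1-R_s)$; the within-block terminal allocation $\mathbf{D}=(b_1,\ldots,b_n)^\tp/\sum_i b_i$, which is symmetric Dirichlet with parameter $(1-s,\ldots,1-s)^\tp$; and the complementary allocation $(a_1,\ldots,a_n)^\tp/\sum_i a_i$. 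In particular $\mathbf{D}$ is independent of $(a_1,\ldots,a_n)^\tp=\xib_s$ given $R_1=v$. By the ratio identity (\ref{eq:gamma_ratio}) and the independent increments of the underlying gamma process, the within-block time profiles $\g^{(i)}(u)=(\xi^{(i)}_u-\xi^{(i)}_s)/b_i$ are mutually independent activity-$1$ gamma bridges from $0$ at time $s$ to $1$ at time $1$, independent of $\mathbf{D}$ and of the totals. This yields, conditional on $R_1=v$, the pathwise identity $\{\xib_u-\xib_s\}_{s\le u\le1}=\{(R_1-R_s)\,\mathbf{D}\circ\boldsymbol{\g}_u\}_{s\le u\le1}$, in which the law of $(\mathbf{D},\boldsymbol{\g})$ depends on neither $v$ nor $\xib_s$.

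Finally I would remove the conditioning on $R_1$. Since the law of $(\mathbf{D},\boldsymbol{\g})$ is free of $v$ and of $\xib_s$, given $\xib_s$ the pair $(\mathbf{D},\boldsymbol{\g})$ is independent of $R_1$; and by Proposition \ref{prop:R_GRB}, equation (\ref{eq:R_law2}), the $\xib_s$-conditional law of $R_1$ is $\nu_{s1}$. Therefore $R^*:=R_1-R_s$ has, given $\xib_s$, the law $\nu^*(A)=\nu_{s1}(A+R_s)$ and is independent of $(\mathbf{D},\boldsymbol{\g})$, which together with the marginal laws identified above establishes the claimed identity with $R^*$, $\mathbf{D}$ and $\boldsymbol{\g}$ mutually independent. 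The step I expect to require the most care is turning the single Dirichlet law of the bridge increments into the required \emph{mutual} independence of the terminal allocation $\mathbf{D}$, the within-block bridges $\boldsymbol{\g}$, and the totals $(R_s,R_1)$; the neutrality property of the Dirichlet distribution is the decisive tool. Handling the terminal time $t=1$ needs no separate argument here, because the whole construction is conditioned on the terminal value and so never invokes the density $\Psi_t$, which degenerates as $t\uparrow1$.
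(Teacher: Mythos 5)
Your proof is correct, but it takes a genuinely different route from the paper's. The paper characterises the law of $\{\xib_t-\xib_s\}_{s\leq t\leq 1}$ through its increments over arbitrary partitions of $[s,1]$ together with the Kolmogorov extension theorem, observes that the resulting increment vector has a multivariate Liouville distribution given $\xib_s$, and then invokes \citet[Theorem 6.9]{FKN1990} to factor it as $R^*\,D_i\mathbf{Y}_i$ with the required mutual independence, finally recognising each $\mathbf{Y}_i$ as the increment vector of a gamma bridge. You instead condition on the terminal value $R_1$, under which the master process becomes a single gamma bridge, and derive the product structure by hand: the Dirichlet law of gamma-bridge increments, the complete-neutrality (aggregation) property of the Dirichlet distribution, and the independence of $\{\g_t/\g_T\}$ from $\g_T$ underlying (\ref{eq:gamma_ratio}) give you the mutual independence of $\mathbf{D}$, the within-block bridges, and the totals; you then decondition using the $\xib_s$-conditional law of $R_1$ from Proposition \ref{prop:R_GRB}. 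In effect you re-prove, in the special case needed, the content of the theorem the paper cites, so your argument is more self-contained; it also yields a pathwise identity conditional on $(R_1,\xib_s)$ --- stronger than an identity in law --- and it treats $t=1$ and generating laws $\nu$ without densities uniformly, with no case distinction. What the paper's route buys is brevity: the delicate independence structure is delegated to a citable result rather than to the neutrality argument, which, as you yourself note, is the step of your proof demanding the most care. One further point in your favour: the statement's $\nu^*(A)=\nu_{st}(A+R_s)$ must be read with $t=1$, i.e.\ as the $\xib_s$-conditional law of $R_1-R_s$, since taking $t=1$ in the identity forces $R^*=\|\xib_1-\xib_s\|$; your proof pins this down correctly (via (\ref{eq:R_law2})) where the paper's notation is loose.
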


\begin{proof}
	Fix $k_i\geq 1$ and the partition $s=t_0^i<t_1^i<\cdots<t_{k_i}^i=1$, for $i=1,\ldots,n$. 
	Then define the non-overlapping increments $\{\D_{ij}\}$ and the vectors $\boldsymbol{\D}$ and $\boldsymbol{\a}$ in a similar way to Section \ref{subsec:charac}.
	The distribution of $\boldsymbol{\D}$
	characterises the finite-dimensional distributions, and hence the law, of the process $\{\xib_t-\xib_s \}_{s\leq t\leq 1}$.
	Note that $\boldsymbol{\D}$ are non-overlapping increments of the master GRB $\{\G_t\}$.
	Thus, given $\xib_s$, $\boldsymbol{\D}$ has a multivariate Liouville distribution with parameter vector $\boldsymbol{\a}$
	and generating law $\nu^*(A)=\nu_{s1}(A+R_s)$, for $t\in(s,1]$. 
	It follows from \citep[Theorem 6.9]{FKN1990} that
	\begin{equation*}
		(\D_{i1},\ldots,\D_{ik_i})^\tp \law R^* \, D_i \mathbf{Y}_i, \qquad \text{for $i=1,\ldots,n$,}
	\end{equation*}
	where 
	(i)   $R^*$ has law $\nu^*$, 
	(ii)  $\mathbf{D}=(D_1,\ldots,D_n)^\tp$ has a Dirichlet distribution with parameter vector $(1-s,\ldots,1-s)^\tp$, 
	(iii) $\mathbf{Y}_i\in[0,1]^{k_i}$ has a Dirichlet distribution with parameter vector $(t^i_1-t^i_0,\ldots,t^i_{k_i}-t^i_{k_i-1})^\tp$, 
	(iv)	$\mathbf{Y}_1,\ldots,\mathbf{Y}_n$, $R^*$, and $\mathbf{D}$ are mutually independent.
	
	Let $\{\g(t)\}_{s\leq t \leq 1}$ be a gamma bridge with activity parameter $m=1$ such that $\g(s)=0$ and $\g(1)=1$. 
	Then the increment vector
	\begin{equation}
		\label{eq:gambr_inc}
		(\g(t^i_1)-\g(t^i_0), \ldots, \g(t^i_{k_i})-\g(t^i_{k_i-1}))^\tp
	\end{equation}
	has a Dirichlet distribution with parameter vector $(t^i_1-t^i_0,\ldots,t^i_{k_i}-t^i_{k_i-1})^\tp$. 
	Hence the increment vector (\ref{eq:gambr_inc}) is identical in law to $\mathbf{Y}_i$. 
	From the Kolmogorov extension theorem, this identity characterises the law of $\{\g(t)\}$.	
	It follows that	
	\begin{equation*}
		\{\xi_t^{(i)}-\xi_s^{(i)}\}_{s\leq t\leq 1} \law \{R^* \, D_i \g_{t}\}_{s\leq t\leq 1}, \qquad \text{for $i=1,\ldots,n$},
	\end{equation*}
	which completes the proof.
\end{proof}

\subsection{Uniform process}
We construct a multivariate process from the ASP $\{\xib_t\}$ such that each one-dimensional marginal is uniformly distributed for each $t\in (0,1]$.

Fix a time $t\in(0,1]$.
Each $\xi^{(i)}_{t}$ is a scale-mixed beta random variable with survival function
\begin{equation*}
	\bar{F}_{t}(x)= \int_x^{\infty} I_{1-x/y}(n-t,t) \, \nu(\dd y),
\end{equation*}
where ${I}_{z}(\alpha,\beta)$ is the regularized incomplete Beta function \citep[6.6]{AS1964}.
The random variables $Y^{(i)}_{t}=\bar{F}_{t}(\xi^{(i)}_{t})$, $i=1,\ldots,n$, are then uniformly distributed.

We now define a process $\{\mathbf{Y}_t\}_{0\leq t\leq 1}$ by
\begin{equation*}
	\mathbf{Y}_t=\left(\bar{F}_t(\xi^{(1)}_t),\ldots, \bar{F}_t(\xi^{(n)}_t)\right)^\tp.
\end{equation*}
By construction, each one-dimensional marginal $Y^{(i)}_t$ is uniform for $t>0$.

\subsection{Application}
We give an example of how an ASP can be used in a copula-interpolation problem.
Consider an $k$-period model where $\mathbf{U}_i$ is a vector of $n$ uniform random variables for $i\in\mathcal{I}=\{1,\ldots,k\}$.
Suppose that $\mathbf{U}_k$ has an Archimedean copula with generator $\psi_{U}$.
Given no further information, how might one simulate the vector $\mathbf{U}_i$, $i<k$, or $(\mathbf{U}_1,\ldots,\mathbf{U}_k)^{\tp}$ in a reasonable way?
The solution we propose is to assume that 
\begin{equation*}
	 {U}_i^{(j)}=\bar{F}_{i/k}\left(\xi_{i/k}^{(j)}\right), \qquad j=1,\ldots,n,
\end{equation*}
where $\{\xib_{t}\}_{0\leq t\leq 1}$ is an $n$-dimensional ASP with the generating law found by substituting $\psi_{U}$ into (\ref{eq:bijection2}).
Simulating a sample path of an ASP is straightforward if one can generate variates from its generating law. 
Simulating $(\mathbf{U}_1,\ldots,\mathbf{U}_k)^{\tp}$ is then a matter of numerically evaluating the survival function $\bar{F}_t$.

In a financial setting, this method could be applied to risk modelling.
Let $\mathbf{X}_i$ be the cumulative log-returns of $n$ assets over the next $i$ days.
Suppose that we wish to simulate $\mathbf{X}_i$, for each $i\in\mathcal{I}$, in order to calculate some risk measure (e.g.~value-at-risk).
Assume that we have estimated the distribution function $G_{ij}$ of each one-dimensional marginal $X^{(j)}_i$
(with sufficient historical data, this is usually straightforward).
Assume further that the Archimedean copula with generator $\psi_U$ provides an adequate fit to historical observations of $(G_{1k}(X^{(1)}_k),\ldots,G_{nk}(X^{(n)}_k))^{\tp}$.
We can then jointly simulate $(\mathbf{X}_1,\ldots,\mathbf{X}_k)^{\tp}$ by simulating $\{\xib_t\}$ and setting $X^{(j)}_i = G_{ij}^{-1}({U}_i^{(j)})$.
Using the ASP in this way imposes a significant amount of structure on the copula of $(\mathbf{X}_1,\ldots,\mathbf{X}_k)^{\tp}$.
Indeed, we have exactly one functional degree of freedom, the choice of $\psi_U$.
This structure may be unnecessarily rigid when data and computational time are abundant.
However, in situations where cross-sectional (and temporal) relationships are uncertain, this structure may provide welcome parsimony; 
and in situations where resources are scarce, reducing the problem of fitting the copula of $(\mathbf{X}_1,\ldots,\mathbf{X}_k)^{\tp}$ to fitting the copula of $\mathbf{X}_k$ may save valuable labour.

In Figure \ref{fig:cop_sim} we show some simulations of $\mathbf{U}_i$ when $n=2$, $k=4$, and 
\[ \psi_U(x)=\int_x^{\infty} (1-x/r) \sqrt{\frac{\l}{2\pi r^3}} \exp\left\{ -\frac{\l(r-\mu)^2}{2\mu^2 r} \right\} \dd r, \]
for $x>0$ and constant $\l>0$ and $\mu>0$.
From (\ref{eq:bijection1}), we see that in this case the generating law of $\{\xib_t\}$ is inverse Gaussian.
In Figure \ref{fig:x_cop_sim} we demonstrate some of the temporal dependencies in $(\mathbf{U}_1,\ldots,\mathbf{U}_4)^{\tp}$.

\begin{figure}[ht]
	\begin{center}
		\subfigure[Interpolated copula $\mathbf{U}_1$]{\includegraphics[scale=0.6]{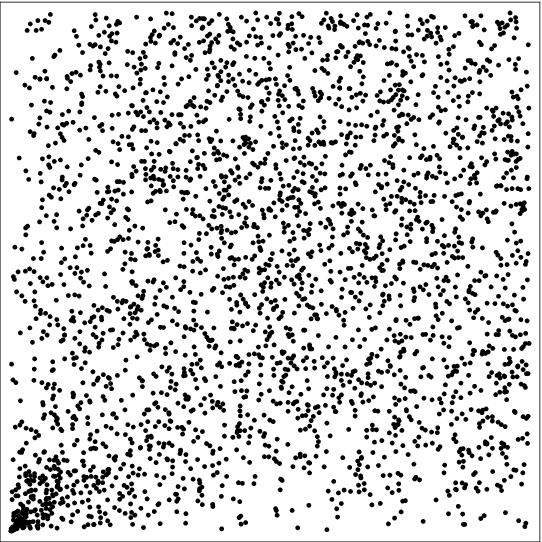}}
		\subfigure[Interpolated copula $\mathbf{U}_2$]{\includegraphics[scale=0.6]{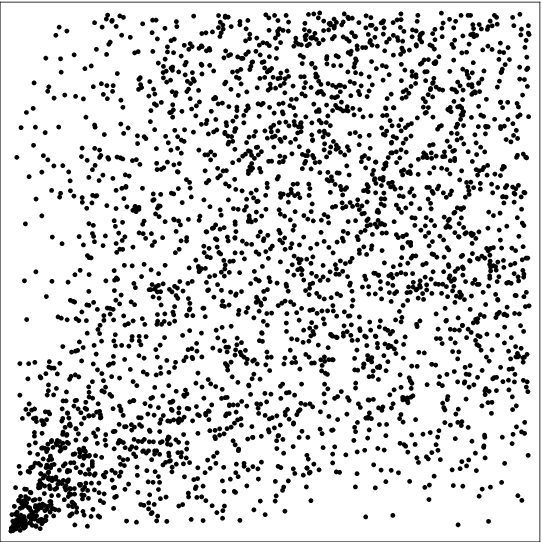}} \\
		\subfigure[Interpolated copula $\mathbf{U}_3$]{\includegraphics[scale=0.6]{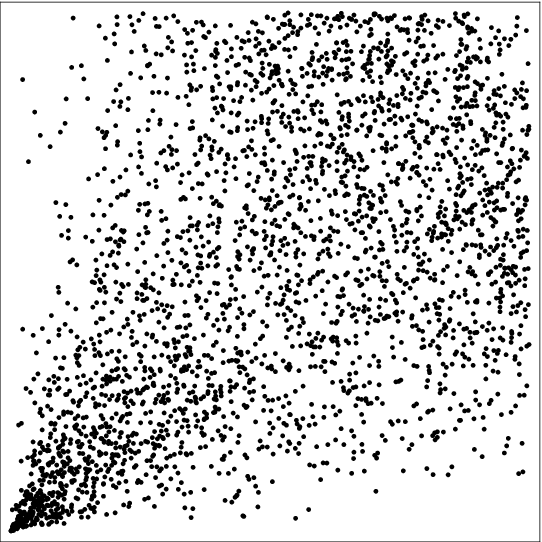}}
		\subfigure[Archimedean copula $\mathbf{U}_4$]{\includegraphics[scale=0.6]{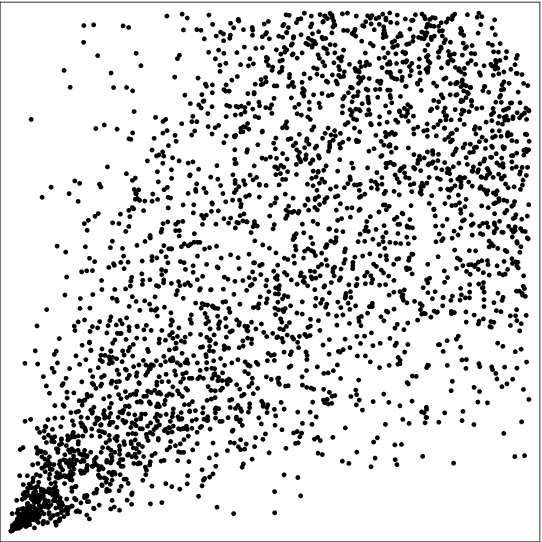}}
	\end{center}
	\caption{Simulations from an Archimedean copula and some interpolated copulas. 
	An inverse-Gaussian law (with parameters $\mu=10$ and $\lambda=0.001$) was used as the generating law of the ASP.
	The interpolation was done at equally-spaced points in the time interval.}
	\label{fig:cop_sim}
\end{figure}

\begin{figure}[ht]
	\begin{center}
		\subfigure[Copula $({U}_1^{(1)},{U}_4^{(1)})$]{\includegraphics[scale=0.6]{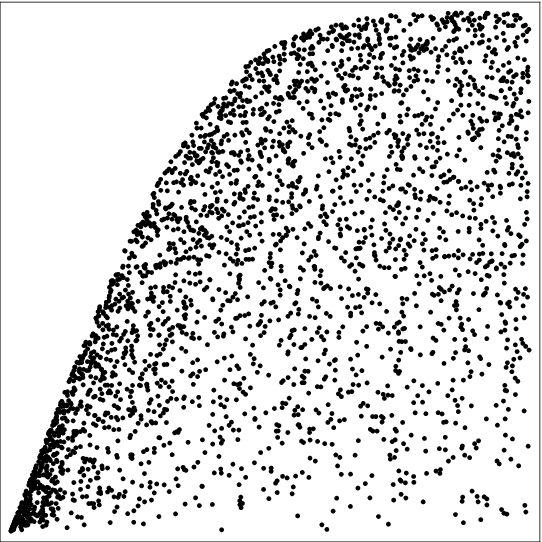}}
		\subfigure[Copula $({U}_2^{(1)},{U}_3^{(1)})$]{\includegraphics[scale=0.6]{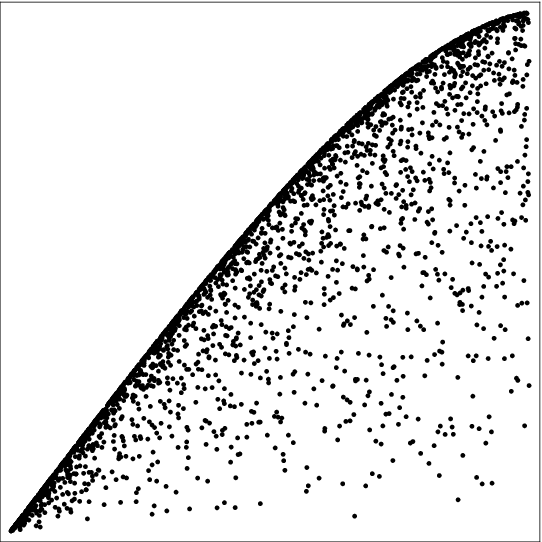}} \\
		\subfigure[Copula $({U}_1^{(1)},{U}_4^{(2)})$]{\includegraphics[scale=0.6]{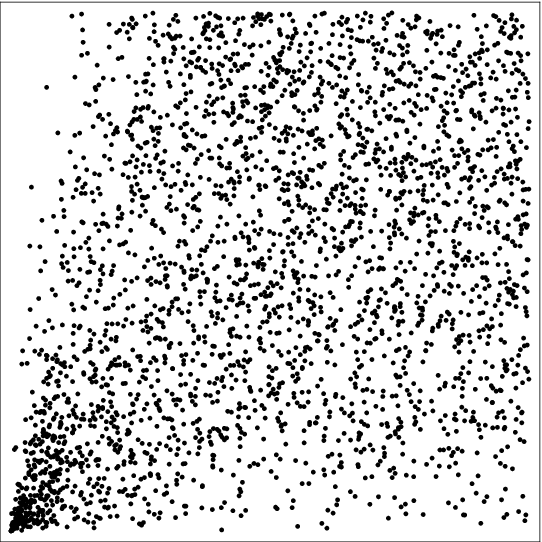}}
		\subfigure[Copula $({U}_2^{(1)},{U}_3^{(2)})$]{\includegraphics[scale=0.6]{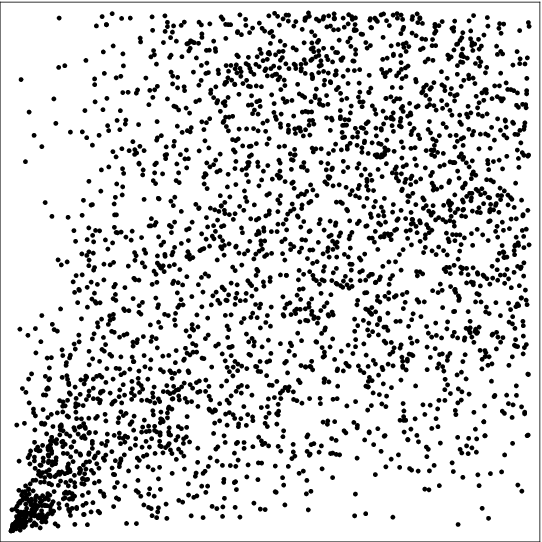}}
	\end{center}
	\caption{Some bivariate-marginal simulations from the eight-dimensional vector $(\mathbf{U}_1,\ldots,\mathbf{U}_4)^{\tp}$.
		Note that $U^{(k)}_j\leq \bar{F}_{j/4}(\bar{F}_{i/4}^{-1}(U^{(k)}_i))$, for $k\in\{1,2\}$ and $1\leq i<j\leq 4$.
		This follows from the monotonicity of one-dimensional marginals of ASPs.}
	\label{fig:x_cop_sim}
\end{figure}

\section{Liouville process} \label{sec:LP}

We generalise ASPs to a family of stochastic processes that we call \textit{Liouville processes}. 
A Liouville process is a Markov process whose increments have multivariate Liouville distributions. 
Liouville processes display a broader range of dynamics than ASPs. 
The one-dimensional marginal processes of a Liouville process are in general not identical.
This generalisation comes at the expense of losing the direct connection to Archimedean copulas.
However, in the language of \citet{MN2010}, the terminal value of a Liouville process has a \emph{Liouville copula};
that is, the survival copula of the terminal value is the survival copula of a multivariate Liouville distribution.

We provide the transition law, moments and an independent gamma bridge representation of a Liouville process.
Proofs are omitted since they are similar to the proofs in Section \ref{sec:ASP}.

\begin{defn} \label{def:LP}
	Fix $n\in\N_+$, $n\geq 2$, and the vector $\mathbf{m}\in\R^n$ satisfying $m_i>0$, $i=1,\ldots,n$.
	Define the strictly increasing sequence $\{u_{i}\}^{n}_{i=1}$ by $u_{0}=0$ and $u_{i}=u_{i-1}+m_{i}$, $i=1,\ldots,n$.
	Then a process $\{\xib_{t}\}_{0\leq t \leq 1}$ satisfying
	\begin{equation*}
		\{\xib_t\}_{0\leq t \leq 1}\law\left\{\left( \G_{t(u_{1})}-\G_{0},\ldots,
		\G_{t(u_{n}-u_{n-1})+u_{n-1}}-\G_{u_{n-1}}
		\right)\right\}_{0\leq t \leq 1},
	\end{equation*}
	for $\{\G_t\}_{0\leq t \leq u_{n}}$ a GRB with activity parameter $m=1$, is an \emph{$n$-dimensional Liouville process}. 
	We say that the generating law of $\{\G_t\}$ is the \emph{generating law} of $\{\xib_{t}\}$ and the \emph{activity parameter} of $\{\xib_{t}\}$ is $\mathbf{m}$.
\end{defn}

Note that allowing the activity parameter of the master process to differ from unity in Definition \ref{def:LP} would not broaden the class of processes.
Indeed, changing the activity parameter of the master process would be equivalent to multiplying the vector $\mathbf{m}$ by a scale factor.

We define a family of unnormalised measures, indexed by $t\in[0,1)$ and $x\in\R_+$, by
\begin{align*}
	\th_0(B;x)&=\nu(B),
	\\	\th_t(B;x)&=\int_B \frac{f_{T(1-t)}(z-x)}{f_T(z)} \, \nu(\dd z),
\end{align*}
for $B\in\Borel$ where $T=\|\textbf{m}\|$.
Again we write	$\Th_t(x)=\th_t([0,\infty);x)$ and $R_t=\|\xib_t\|$.
The process $\{R_t\}$ is a GRB with activity parameter $T$.
Given $\xib_{s}$, the law of $R_{1}$ is $\nu_{s1}(\dd r)= \th_{s}(\dd r;R_s)/\Th_{s}(R_s)$,
and law of $R_{t}$ is
\begin{equation*}
	\nu_{st}(\dd r)=\frac{\Th_{t}(r)}{\Th_{s}(\|\mathbf{x}\|)}\frac{(r-\|\mathbf{x}\|)^{T(t-s)-1}\exp\{-(r-\|\mathbf{x}\|)\}}{\G(T(t-s))}\d r,
\end{equation*}
for $t\in(s,1)$.
Then the Liouville process $\{\xib_t\}$ is a Markov process with the transition law given by
\begin{multline*}
	\P\left(\left. \xi_1^{(1)}\in\dd z_1,\ldots, \xi_1^{(n-1)}\in\dd z_{n-1},\xi_1^{(n)}\in B  \,\right| \xib_s=\mathbf{x} \right)=
	\\ \frac{\th_{\t(s)}(B+\sum_{i=1}^{n-1}z_i;x_n+\sum_{i=1}^{n-1}z_i)}{\Th_s(\|\mathbf{x}\|)}
								\prod_{i=1}^{n-1}\frac{(z_i-x_i)^{m_{i}(1-s)-1}\e^{-(z_i-x_i)}}{\G(m_{i}(1-s))}\d z_i,
\end{multline*}
and
\begin{equation*}
	\P\left( \xib_t\in \d\mathbf{y}  \,|\, \xib_s=\mathbf{x} \right)=
				\frac{\Th_{t}(\|\mathbf{y}\|)}{\Th_s(\|\mathbf{x}\|)}
				\prod_{i=1}^{n}\frac{(y_i-x_i)^{m_{i}(t-s)-1}\e^{-(y_i-x_i)}}{\G(m_{i}(t-s))}\d y_i, 
\end{equation*}
where $\t(t)=1-m_n(1-t)/T$, $0\leq s<t<1$, and $B\in\Borel$.

Similar to an ASP, the joint distribution of the increments of a Liouville process are multivariate Liouville.
In particular, given $\xib_s$ and $t\in(s,1]$, the increment $\xib_t-\xib_s$ has a Liouville distribution with the generating law $\nu^*(B)=\nu_{st}(B+R_s)$, $B\in\Borel$, and parameter vector $(t-s)\mathbf{m}$.
From this we find that, for fixed $0\leq s<t\leq 1$, the first- and second-order moments of $\xib_t$, are
\begin{align*}
	 \E\left(\left.\xi^{(i)}_{t}\,\right| \xib_s\right)
	 &=\frac{m_{i}}{T}\mu_1 +\xi_{s}^{(i)},
	\\
	 \var\left(\left.\xi^{(i)}_{t}\,\right| \xib_{s}\right)
	 &=\frac{m_{i}}{T}\left[\left\{\frac{m_{i}(t-s)+1}{T(t-s)+1}\right\}\mu_2-\frac{m_{i}}{T}\mu_1^2\right],
	\\
	 \cov\left(\left.\xi^{(i)}_{t}, \xi^{(j)}_{t} \,\right| \xib_{s}\right)
	 &=\frac{m_{i}m_{j}(t-s)}{T}\left\{\frac{\mu_2}{T(t-s)+1}-\frac{\mu_1^2}{T(t-s)}\right\}, \quad (i\ne j),
\end{align*}
where
\begin{align*}
	\mu_1&=\frac{t-s}{1-s}\{\E(R_1  \,|\, R_{s})-R_{s}\},
	\\ \mu_2&=\frac{(t-s)\{1+T(t-s)\}}{(1-s)\{1+T(1-s)\}}\E\left(\left.(R_1-R_s)^2  \,\right| R_{s}\right).
\end{align*}

The law of the increments of an $n$-dimensional Liouville process can be characterised by a positive random variable 
multiplied by the Hadamard product of an $n$-dimensional Dirichlet random variable and a vector of $n$ independent gamma bridges.
In particular, given the value of $\xib_s$, $\{\xib_t\}$ satisfies the following identity in law:
\begin{equation*} 
	\{\xib_t-\xib_s \}_{s\leq t\leq 1} \law \{R^* \, \mathbf{D} \circ \boldsymbol{\g}_{t} \}_{s\leq t\leq 1},
\end{equation*}
where
(a) $\mathbf{D}\in [0,1]^n$ has a Dirichlet distribution with parameter vector $(1-s)\textbf{m}$;
(b) $\{\boldsymbol{\g}_{t}\}$ is a vector of $n$ independent gamma bridges, such that the $i$th marginal process is a gamma bridge with activity parameter $m_{i}$, starting at the value 0 at time $s$, and terminating with unit value at time 1;
(c) $R^*>0$ is a random variable with law $\nu^*(B)=\nu_{s1}(B+R_{s})$, $B\in\Borel$;
(d) $R^*$, $\mathbf{D}$, and $\{\boldsymbol{\g}_{t}\}$ are mutually independent.

\subsection{Application}
In financial markets, volatility estimates play an important role in both trading and risk management.
Volatility is unobserved and there is a large body of literature covering its measurement and forecasting (see, for example, \citet{ABDL2003}).
One method of circumventing the intangible nature of volatility is to consider \emph{realized volatility}, or equivalently \emph{realized variance} (RV).
For a given time period (usually one trading day) the RV of an asset is defined as
\[ V=\sum_{i=1}^k\left\{\log(P_i)-\log(P_{i-1})\right\}^2, \]
where $P_0,\ldots,P_k$ are the prices of the asset taken at regular intervals throughout the time period (e.g.~every five minutes).
Realized volatility is then defined as $\sqrt{V}$.

In this application, we model the intraday accumulation of two stock-index RVs using a Liouville process.
A day trader may wish to trade DAX and FTSE futures as a pair in the afternoon, based on some price divergence during the morning.
In order to size the trade appropriately and to manage risk, measures of volatilities for the futures may be required.
Currently, it is common for volatilities to be forecast using information up to the previous day's close-of-business.
However, the proposed model can further incorporate the morning's price movements for an updated, and potentially superior, joint forecast of the afternoon's futures volatilities.
The methods outlined here can be adapted to the modelling of other cumulative phenomena, such as insurance claims as described in Section \ref{sec:intro}.

We define a process $\{\boldsymbol{\eta}_t\}$ taking values in $\R^2$ by
$\eta_t^{(j)}=\xi^{(j)}(\t_j(t))$, $j\in\{1,2\}$, where $\{\xib(t)\}$ is a Liouville process with activity parameter $\mathbf{m}$, and the deterministic time-change function $\t_j$ is continuous, increasing, and satisfies $\t_j(0)=0$ and $\t_j(1)=1$.
Thus $\{\boldsymbol{\eta}_t\}$ is a time-changed Liouville process.
The time change is employed to capture any intraday seasonality observed in market volatility.

We fix a trading day and assume that $\{\boldsymbol{\eta}_t\}$ satisfies
\begin{equation*}
	 \eta_t^{(j)}=\sum_{i=1}^{tk}\left\{\log(P_i^{(j)})-\log(P_{i-1}^{(j)})\right\}^2,
\end{equation*}
for $t=1/k,\ldots,k/k$, $j\in\{1,2\}$, where $\{P_i^{(1)}\}$ and $\{P_i^{(2)}\}$ are the intraday prices of FTSE and DAX futures, respectively.
Thus $t=0$ and $t=1$ are the start and the end of the trading day, and $\boldsymbol{\eta}_1=\xib_1$ is a vector of the day's FTSE and DAX RVs.
See Figure \ref{fig:traj} for a sample path of the accumulation of FTSE and DAX RV during a trading day.

\begin{figure}[ht]
	\begin{center}
		\includegraphics[scale=0.25]{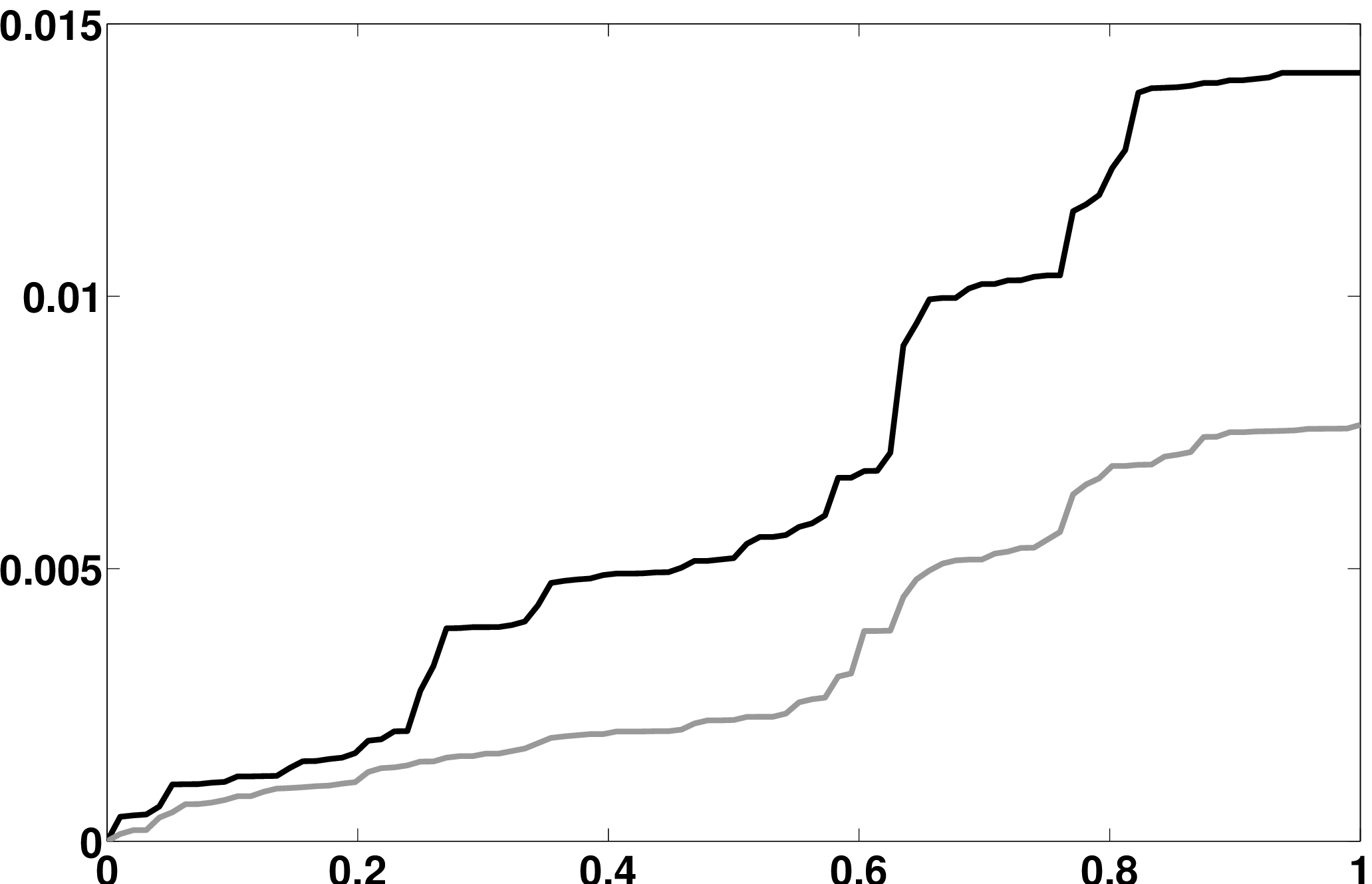}
	\end{center}
	\caption{%
			Sample paths of the accumulation of FTSE RV (black line) and DAX RV (grey line) during a trading day.			}
	\label{fig:traj}
\end{figure}

Before the start of the trading day, we can use a time-series model, fitted to historical data, to provide the generating law of the Liouville process.
The generating law is the law of the sum $R_1$ of the day's FTSE and DAX RV.

We define
\[ X_{ij}=\frac{\eta^{(j)}_{i/k}-\eta^{(j)}_{(i-1)/k}}{R_1}
	=\frac{\left\{\log(P_i^{(j)})-\log(P_{i-1}^{(j)})\right\}^2}{\sum_{i,j}\left\{\log(P_i^{(j)})-\log(P_{i-1}^{(j)})\right\}^2}, \]
then the vector $\mathbf{X}=(X_{1,1},\ldots,X_{k,1},X_{1,2},\ldots,X_{k,2})^\tp$ has a Dirichlet distribution with parameter 
vector $\ab=(m_1 \boldsymbol{\rho}^{(1)},m_2 \boldsymbol{\rho}^{(2)})^{\tp}$, where $\rho_i^{(j)}=\t_j(i/k)-\t_j((i-1)/k)$, for $j\in\{1,2\}$ and $i=1,\ldots,k$.
Once we have the generating law, it remains to estimate the activity parameter $\mathbf{m}$, and the time-change increments $\boldsymbol{\rho}^{(1)}$ and $\boldsymbol{\rho}^{(2)}$.
Using historical intraday prices, these can be jointly fitted using maximum likelihood estimation or moment matching.
This two-stage fitting of the distributions of $R_1$ and $\mathbf{X}$ is natural since they are independent.

To demonstrate the feasibility of this approach, we implemented the model using five-minute prices of FTSE and DAX futures contracts.
We fixed the market opening time ($t=0$) at 9am (GMT) and the market closing time ($t=1$) at 5pm (GMT).
We used the auto-regressive fractionally-integrated model described in \citep{ABDL2003} to construct a log-normal generating law.
The values of $\boldsymbol{\rho}^{(1)}$ and $\boldsymbol{\rho}^{(2)}$ were fitted by moment matching.
An increase in the volatility of FTSE and DAX futures in the European afternoon can be observed in these values coinciding with the opening of financial markets in the US (see Figure \ref{fig:rho}).
Given the values of $\boldsymbol{\rho}^{(1)}$ and $\boldsymbol{\rho}^{(2)}$, we fitted the activity parameter $\mathbf{m}=(53.72,53.15)^{\tp}$ by maximum likelihood estimation.
This value implies that the observed dynamics differ significantly from those of an ASP.
In particular, large values for the activity parameter imply that the observed trajectories increase gradually, exhibiting few large jumps.
This is in contrast to the trajectories of an ASP which increase little between frequent large jumps (cf.~Figure \ref{fig:surprise}).
We used the fitted model to update the joint density of FTSE and DAX RV at various times during a trading day.
As time passes, the joint density converges to a delta function at the actual values of the day's RVs (see Figure \ref{fig:den}).

\begin{figure}[ht]
	\begin{center}
		\subfigure[FTSE $\boldsymbol{\rho}^{(1)}$]{\includegraphics[scale=0.2]{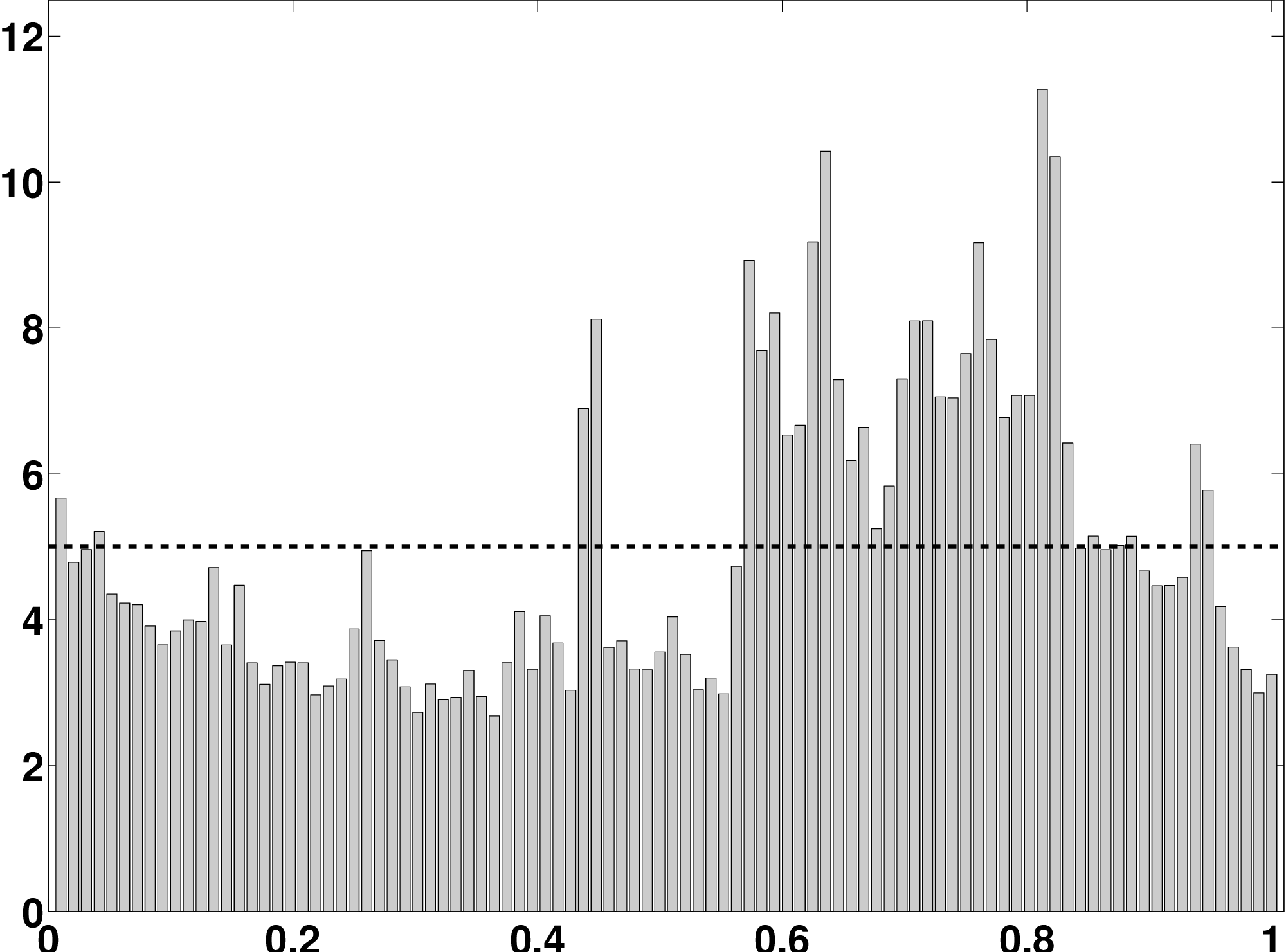}}
		\subfigure[DAX $\boldsymbol{\rho}^{(2)}$]{\includegraphics[scale=0.2]{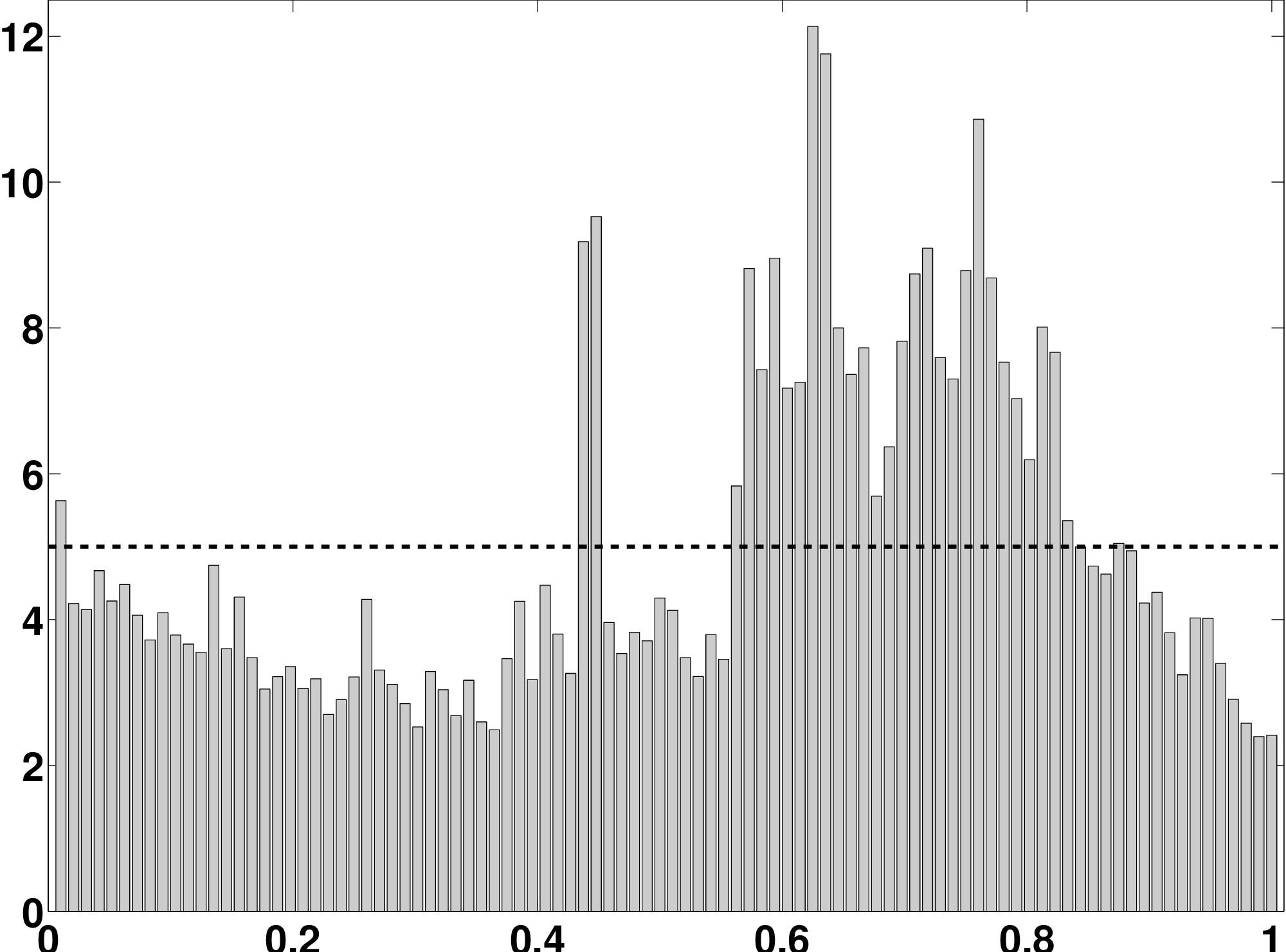}}
	\end{center}
	\caption{%
			Values of $\boldsymbol{\rho}^{(1)}$ and $\boldsymbol{\rho}^{(2)}$ fitted by moment matching.
			The hight of the $i$th bar corresponds to the value of the $i$th element of the vector in units of minutes.
			Each bar represents the number of minutes that the time-change $\t_j$ attributes to each five-minute interval of physical time.
			The higher volatility in afternoon trading manifests as bars with height greater than five.
			}
	\label{fig:rho}
\end{figure}

\begin{figure}[ht]
	\begin{center}
		\subfigure[$t=0$]{\includegraphics[scale=0.25]{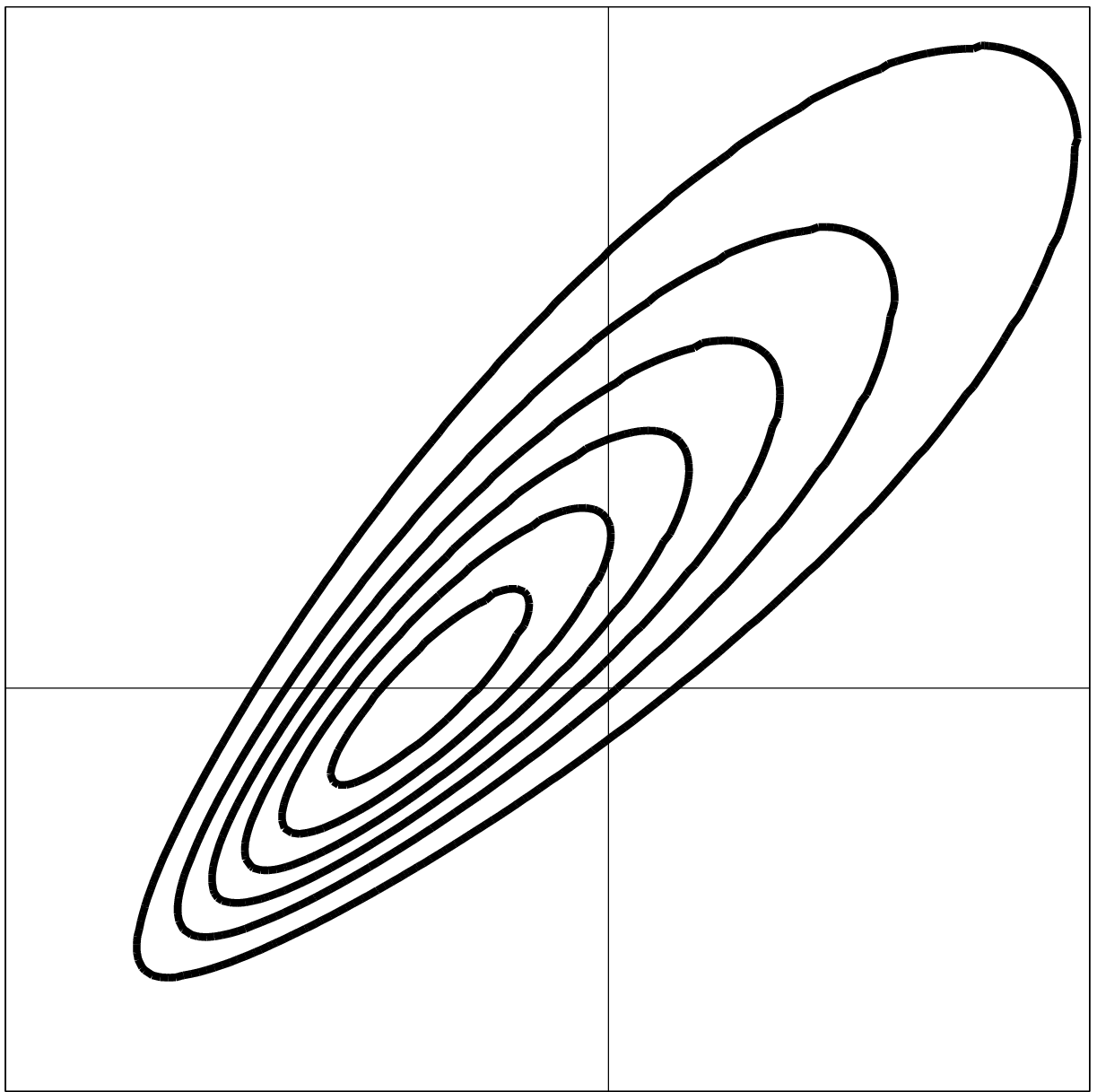}}
		\subfigure[$t=0.25$]{\includegraphics[scale=0.25]{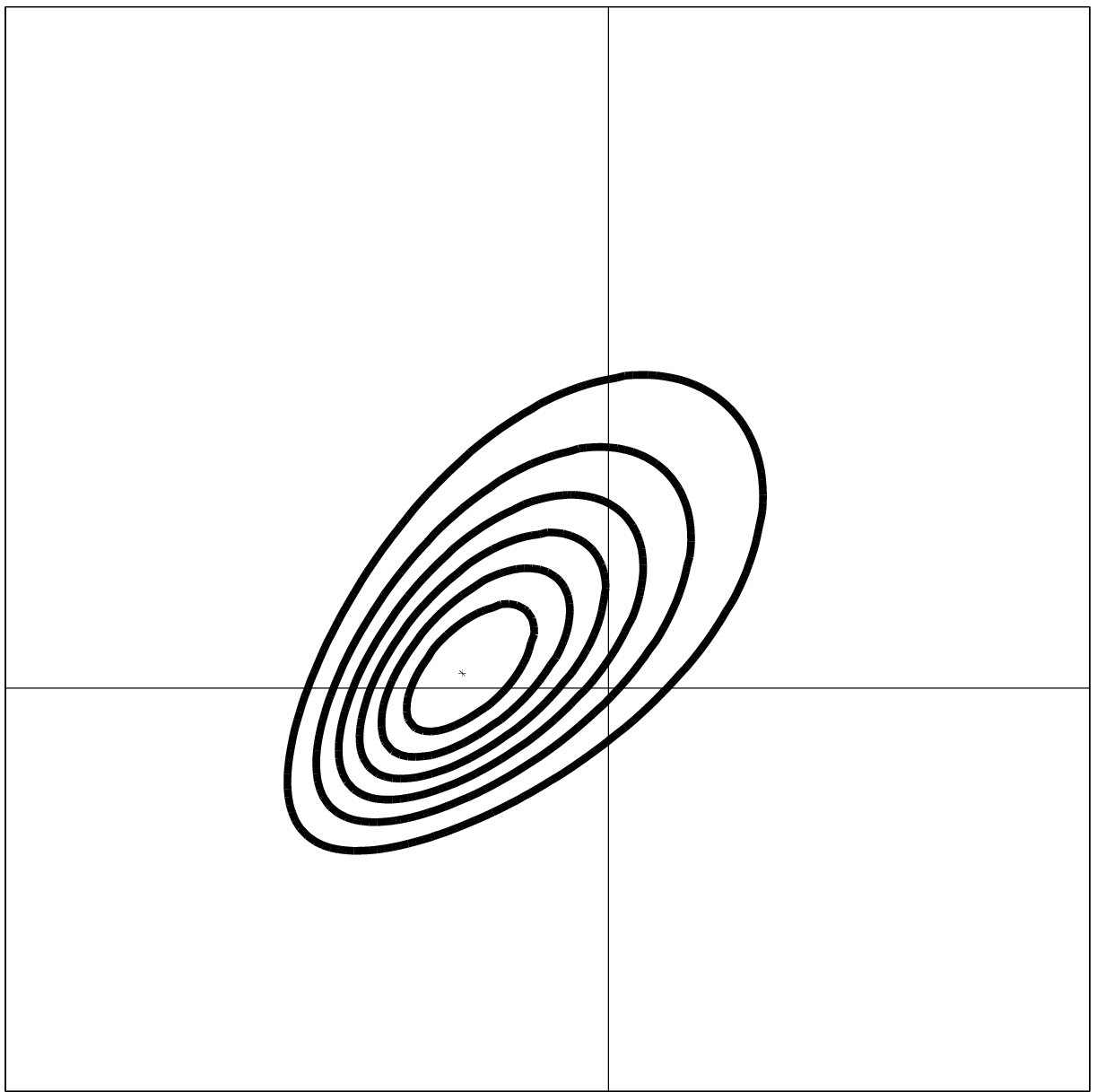}} \\
		\subfigure[$t=0.5$]{\includegraphics[scale=0.25]{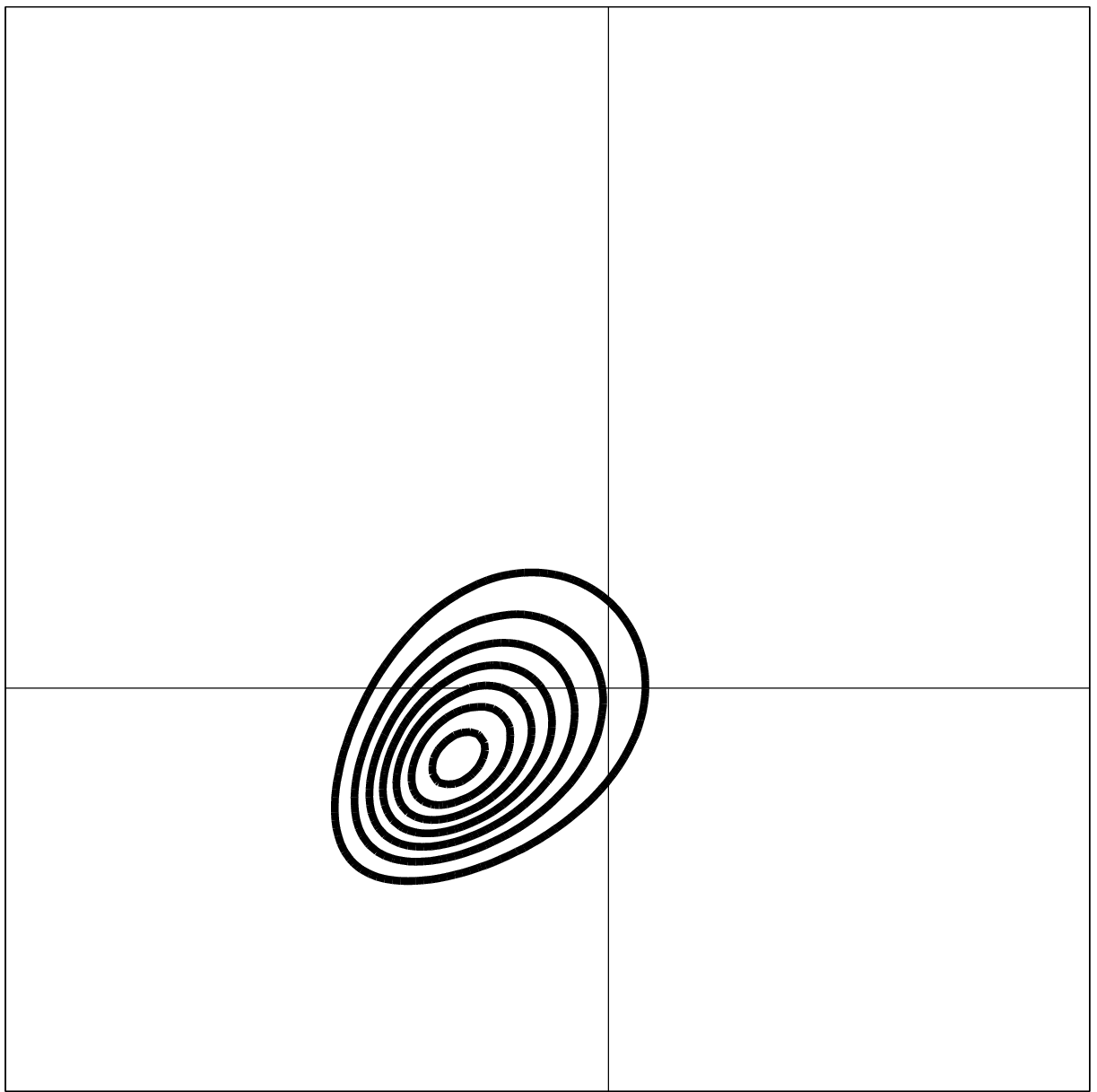}}
		\subfigure[$t=0.75$]{\includegraphics[scale=0.25]{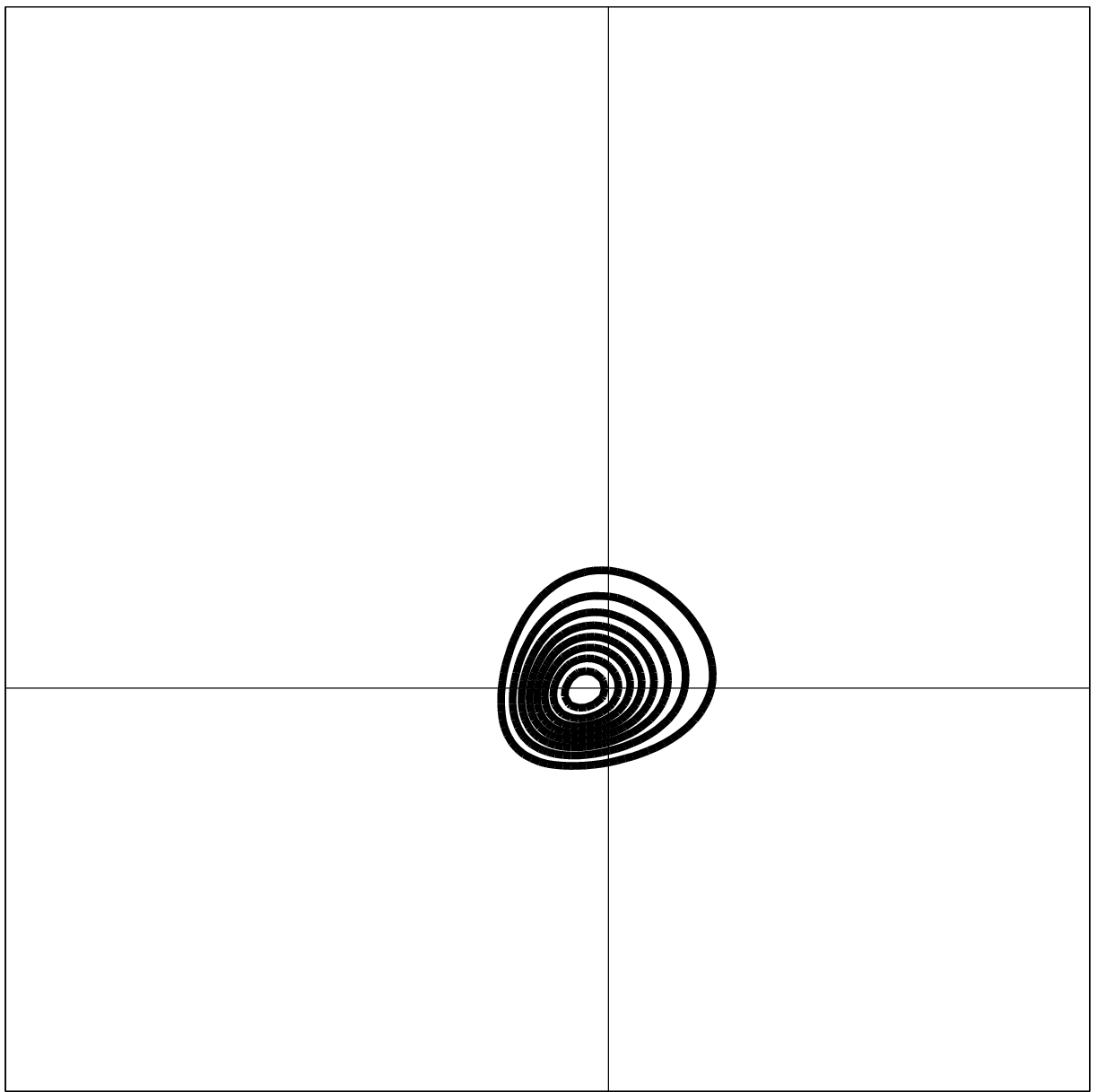}}
	\end{center}
	\caption{%
			Contour plots of the joint density of FTSE and DAX RVs at four times during the trading day.
			The cross hairs indicate the actual value.
			}
	\label{fig:den}		
\end{figure}

\section{Conclusion} \label{sec:conclude}
Through ASPs, we have presented an avenue to extend the theory and application of Archimedean copulas in multi-period and continuous-time frameworks.  
Liouville processes are similarly useful in extending Liouville distributions and Liouville copulas. 
We have also shown that Liouville processes are a natural multivariate extension of GRBs, and thus are a flexible tool in the modelling of cumulative processes.

\section*{Acknowledgements}
The authors are grateful to two anonymous referees whose comments led to significant improvement in this paper.


\bibliography{ASP}

\end{document}